\theoremstyle{plain}
\newtheorem{theorem}{Theorem} 
\newtheorem{lemma}{Lemma}
\newtheorem{proposition}{Proposition}
\newtheorem{corollary}{Corollary}
\newtheorem{claim}{Claim}
\newtheorem{fact}{Fact}
\theoremstyle{definition}
\newtheorem{definition}{Definition}
\newtheorem{problem}{Problem}
\newtheorem{algorithm}{Algorithm}
\theoremstyle{remark}
\newtheorem{remark}{Remark}
\DeclareMathOperator*{\argmin}{arg\,min}
\newcommand{\Comment}[1]{}
\newcommand{\cX}{{\cal X}}
\newcommand{\cE}{{\cal E}}
\newcommand{\cC}{{\cal C}}
\newcommand{\cV}{{\cal V}}
\newcommand{\cY}{{\cal Y}}
\newcommand{\cN}{{\cal N}}
\newcommand{\bb}{{\textbf b}}
\newcommand{\bc}{{\textbf c}}
\newcommand{\ba}{{\textbf a}}
\newcommand{\bp}{{\textbf p}}
\newcommand{\bx}{{\textbf x}}
\newcommand{\bs}{\mathbf{s}}
\newcommand{\by}{\mathbf{y}}
\newcommand{\hxyt}{\text{H}(X|Y^t)}
\newcommand{\E}{\mathrm{E}}
\newcommand{\prob}{\mathrm{Pr}}
\newcommand{\defined}{\stackrel{\triangle}{=}}
\begin{document}
% paper title
\title{Efficient Joint Network-Source Coding for Multiple Terminals with Side Information\thanks{The material in this
paper was presented in part at the Information Theory and Applications Workshop, San Diego, USA, 2011 and the IEEE International Symposium on Information Theory, Saint Petersburg, Russia, 2011.}}
% author names and affiliations
% use a multiple column layout for up to three different
% affiliations
\author{Chen Avin\thanks{Authors are with the Department of Communication Systems Engineering, Ben-Gurion University of the Negev, Israel. Email: \{avin,borokhom,coasaf,zvilo\}@bgu.ac.il}, Michael Borokhovich$^\dagger$, Asaf Cohen$^\dagger$ and Zvi Lotker$^\dagger$}
% make the title area

\maketitle
\date
\begin{abstract}
Consider the problem of source coding in networks with multiple receiving terminals, each having access to some kind of side information. In this case, standard coding techniques are either prohibitively complex to decode, or require network-source coding separation, resulting in sub-optimal transmission rates. To alleviate this problem, we offer a joint network-source coding scheme based on matrix sparsification at the code design phase, which allows the terminals to use an efficient decoding procedure (syndrome decoding using LDPC), despite the network coding throughout the network. Via a novel relation between matrix sparsification and rate-distortion theory, we give lower and upper bounds on the best achievable sparsification performance. These bounds allow us to analyze our scheme, and, in particular, show that in the limit where all receivers have comparable side information (in terms of conditional entropy), or, equivalently, have weak side information, a vanishing density can be achieved. As a result, efficient decoding is possible at all terminals simultaneously. Simulation results motivate the use of this scheme at non-limiting rates as well.
\end{abstract}
%%%%%%%%%%%%%%%%%%%%%%%%%%%%%%%%%%%%%%%%%%%%%%%%%%%%%%%%%%%%%%%%%%%%%%%%
%%%%%%%%%%%%%%%%%%%%%%%%%%%%%%%%%%%%%%%%%%%%%%%%%%%%%%%%%%%%%%%%%%%%%%%%
\section{Introduction}
In this work, we consider the problem of efficient distributed source coding for large networks with multiple receiving terminals. Assume that information source $X$ is available at some source node $s$ in a network with noiseless links. The source is to be distributed (with arbitrarily small error probability) to \emph{several} receiving nodes in the network, $T=\{t_1, \ldots t_K\}$, where each node $t\in T$ has some side information $Y^t$ available. We assume that for each $t$, $(X,Y^t)$ have some known joint distribution. An example is given in Figure \ref{fig:si}.
% the model
\def\FIGA{
\begin{figure}
\centering
\includegraphics[scale=0.7]{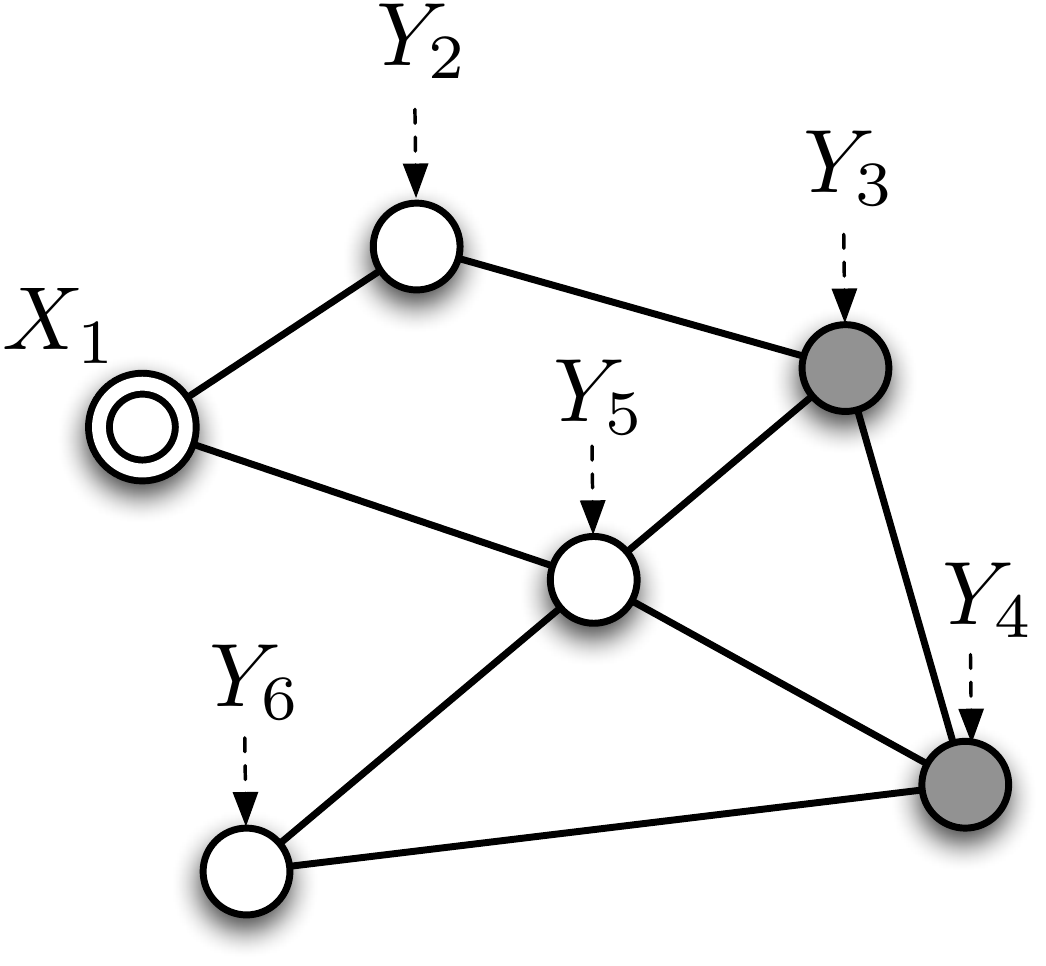}
\caption{A network with side information. $X_1$ (the source) is demanded at Terminals $3$ and $4$. Each node has a side information message $Y_i$, correlated with $X_1$.}
\label{fig:si}
\end{figure}
}

This problem arises in a multitude of networking applications, such as sensor networks, peer-to-peer, and content distribution networks. A few examples to be kept in mind can be a sensor in a network distributing a temperature measurement while each receiving node has its own altitude reading (which is highly correlated) or a video file streamed distributively in a network where receivers may have previous/noisy/low-resolution versions of that stream.  

The problem of lossless coding with side information has     
been studied extensively. We give here only a brief overview of the closely related works. In \cite{SlepianWolf73}, Slepian and Wolf considered the problem of separately encoding two correlated sources and joint decoding (Figure \ref{fig:slepian_wolf}). The asymmetric case where side information $Y$ is available at the decoder is a special case (Figure \ref{fig:asym_slepian_wolf}). In \cite{Ho04networkcoding}, Ho et al.\ considered the multicast problem with correlated sources (Figure \ref{fig:Ho05}), and completely characterized the \emph{rate region} for this problem: the set of required link capacities to support the multicast. In a way, \cite{Ho04networkcoding} can be viewed as extending the Slepian Wolf problem to arbitrary networks through \emph{network coding}. Further extensions also appeared in \cite{Barros_Servetto06} and \cite{BakshiEffros08} (Figure \ref{fig:Bakshi08}).

The canonical, three-node network of Slepian and Wolf \cite{SlepianWolf73}, as well as its extension in \cite{Ho04networkcoding} do not consider practical decoding algorithms. Such algorithms where considered in \cite{aaron2002compression} using Turbo codes, in \cite{pradhan2003distributed} based on the Wyner scheme \cite{wyner74recent} and in \cite{liveris2002compression,caire2003lossless,lan2005slepian} based on Low-Density Parity-Check Codes (LDPC), including a solution for any point on the rate region \cite{schonberg2004distributed}. In a recent study \cite{coleman2004some,coleman2005towards}, the authors use linear-programming based decoding techniques. An algebraic (Reed-Solomon-based) coding scheme was suggested in \cite{LiRamamoorthy11}. Nevertheless, a major disadvantage of applying decoding techniques based on structured linear codes to a networked environment is the need to separate network and source coding. This separation can fail for most demand structures \cite{Ramamoorthy_et_al06} (that is, will require higher rates than the necessary cut-set bounds). Hence, while the distributed source coding model was studied extensively in the information theory literature throughout the years, a huge gap still exists when trying to apply the \emph{efficient decoding schemes to large networks with many terminals}, where network coding is essential to achieve maximal throughput.

First introduced by Ahlswede et al.\ in \cite{Ahlswede_et_al_00}, network coding deals with various coding operations that can be performed at \emph{intermediate nodes} in the network in order to achieve certain rate goals. 
For linear network coding \cite{Li_et_al03,Ho_et_al06}, decoding sums up to solving a set of linear equations. When all terminals can solve and reconstruct the original data, the application layer above the network code can choose the \emph{source coding scheme} independently, as first the network code is decoded, and then the source code. This is a \emph{separation based} coding scheme. However, in general, separation schemes fail to achieve the maximal throughput in the network (e.g., a rank equal to the max-flow at each node \cite{Li_et_al03}), and an extra rate is required if each terminal wishes to decode its own subset of bits. In this case we say that \emph{separation fails} \cite{Ramamoorthy_et_al06}. The solution is thus one of the following: use extra rate (compared to the max-flow bound), or use \emph{joint network-source code}, where the network and source codes are matched and are designed to be decoded together. To date, joint network-source coding schemes are prohibitively complex, while efficient decoding schemes require a structured source code, which is shattered by the lack of separation. A midpoint approach was taken in \cite{lee2007minimum} using minimum-cost optimization, giving a trade-off between joint and separate network-source coding. In \cite{Wu09practical}, a different approach was taken, yet still sub-optimal in terms of the required rates. Alternatively, in \cite{maierbacher2009practical,Cruz11} the authors suggest a sum-product-based algorithm, nevertheless, in this case, efficient decoding requires a network with bounded degree nodes. Hence, our goal in this work is to design efficient coding schemes which are both rate-optimal and applicable to general network topology. 
\paragraph{Main Contribution:}
We formally define the setting for joint network-source coding in a network with side information at multiple terminals. We design a joint coding scheme that, under certain constraints, both achieves the lower bounds on the link capacities and facilitates efficient decoding at the terminals, by inducing, at all terminals simultaneously, a source code which is based on a low-density parity-check matrix, and hence can be decoded efficiently. This joint network-source code is achieved by a sparsification procedure of the source doding matrix and the network transfer matrix together, at the code-design stage. Our analysis of the sparsification performance is based on a novel connection between matrix sparsification and rate-distortion theory, a connection which besides being interesting on its own, allows us to both analyze the best possible sparsification performance and give a randomized algorithm to approximate it. Numerical results also illustrate the efficiency of the derived codes and algorithms. Furthermore, this connection gives a novel randomized algorithm for coding with a fidelity criterion. 

The rest of the paper is organized as follows. Section \ref{prelim} gives the required notation, related results and describes the model of distributed source coding with several receivers, including a discussion of the main difficulties. Section \ref{main} gives our main results. Section \ref{sec. ramdomized} describes the coding algorithm. Section \ref{sec. numerical} gives the numerical results and Section \ref{conc} concludes the paper.   
%%%%%%%%%%%
%%%%%%%%%%%%%%%%%%%%%%%%%%%%%%%%%%%%%%%%%%%%%%%%%%%%%%%%%%%%%%%%%%%%%%%%
\section{Preliminaries}\label{prelim}
\paragraph{Network and Source Model:} 
A network is defined as a directed acyclic graph $(\cV,\cE)$, 
where $\cV$ is the set of vertices (nodes) and 
$\cE\subseteq\cV\times\cV$ is the set of edges (links). 
Associated with each edge $e \in \cE$ is a capacity $c(e)\geq 0$. 
For any node $v$, we denote the set of incoming and outgoing edges of node $v$ by $In(v)$ and $Out(v)$, respectively.

Let $\{X_i\}_{i=1}^{\infty}$ be a sequence of independent and identically distributed random variables with alphabet $\cX$. Source $\{X_i\}$ is available at node $s \in \cV$. The case of multiple sources can be dealt with similar techniques, by coding for points on the rate region. Let $T\subseteq \cV$, $|T|=K$, be a set of terminals. A terminal $t \in T$ has (side) information $\{Y^t_i\}_{i=1}^{\infty}$ available to it. For each $t$, $\{Y^t_i\}_{i=1}^{\infty}$ is a sequence of independent and identically distributed random variables with alphabet $\cY$. We assume that the pairs $(X_i,Y^t_i)$ may be dependent, with some known joint distribution $p(x,y^t)$, yet for different time indices $i$ and $j$, $(X_i,Y^t_i)$ and $(X_j,Y^t_j)$ are independent (that is, a memoryless model).
We assume that the node $s$ has no incoming edges and that each terminal node in $T$ has no outgoing edges.

For any vector of rates $(R_e)_{e\in \cE}$, a $((2^{nR_e})_{e\in \cE},n)$ code comprises the following mappings:
$g^n_e:\cX^n\mapsto\{1,\ldots,2^{nR_e}\}$, for  $e\in Out(s)$
and
$g^n_e:\Pi_{e'\in In(v)}\{1,\ldots,2^{nR_{e'}}\}  
\mapsto\{1,\ldots,2^{nR_e}\}$, for $e\in Out(v), v\ne s$. 
That is, the source codes a block of $n$ inputs and maps it to messages on its outgoing links. Each internal node may \emph{code} its inputs and, again, map them to messages on the outgoing links. Thus, for each terminal $t\in T$, $\Pi_{e\in In(t)}\{1,\ldots,2^{nR_e}\}$ is the set of inputs to that terminal. For simplicity, we denote this input by $f_t(X^n)$. $f_t(X^n)$, together with $(Y^t_1,\ldots,Y^t_n) \defined Y^{t,n}$ is the data the terminal uses for decoding.

Throughout this work, we use \emph{linear network coding}, that is, source symbols are mapped to a vector of length $w$ over some finite field $F$. An edge $e\in Out(v)$, $v\in \cV$, carries an element of the field, which is a linear combination (over $F$) of the variables on all edges $e'\in In(v)$. Note that this ``unit capacity" assumption does allow for any rational capacity by adding parallel edges and normalizing over several network uses. It is convenient to represent the input to the source as $w$ imaginary incoming edges. Under this notation, $f_e$ denotes the global coding vector for edge $e$ (its coefficients with respect to the input variables) and for some $t\in T$, $V_t$ denotes the linear space spanned by $\{f_e: e\in In(t)\}$.

For nodes $s$ and $t$, we denote by $\mathrm{maxflow}(s,t)$ the capacity of the maximal flow between $s$ and $t$. This flow is equal to the capacity of the minimal cut separating $s$ from $t$.

\begin{definition}[\cite{yeung2008information}]
A $w$-dimensional linear network code on an acyclic network is \emph{linear broadcast} if for every non-source node $t$, $\text{rank}(V_t)=\min \{w,\mathrm{maxflow}(s,t)\}$.  
\end{definition}
\begin{corollary}[\cite{yeung2008information}]\label{cor. broadcast code}
For any acyclic network and sufficiently large base field $F$, there exists a $w$-dimensional linear broadcast code.  
\end{corollary}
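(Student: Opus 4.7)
The plan is to prove the existence of a linear broadcast code by the standard algebraic / Schwartz--Zippel argument of Koetter--Medard: parameterize the code by its local coding coefficients, encode the "broadcast'' requirement at each terminal as the non-vanishing of a polynomial in those coefficients, show each such polynomial is not identically zero, and then conclude via Schwartz--Zippel that a single common assignment works over any sufficiently large field $F$.

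First I would introduce a formal indeterminate $\alpha_{e',e}$ for every pair with $e'\in In(v)$, $e\in Out(v)$, $v\in\cV$, so that the value on each edge $e$ is $\sum_{e'\in In(v)}\alpha_{e',e}\cdot(\text{value on }e')$, and the $w$ imaginary edges at $s$ carry the standard basis of $F^w$. Because the network is acyclic, the global coding vector $f_e\in F[\alpha]^w$ of each edge $e$ is obtained by a finite topological recursion and is therefore a polynomial in the $\alpha$'s (in fact multilinear in each variable). For a terminal $t$, set $d_t=\min\{w,\mathrm{maxflow}(s,t)\}$ and let $M_t(\alpha)$ be the $|In(t)|\times w$ matrix whose rows are the $f_e$, $e\in In(t)$. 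Then $\mathrm{rank}(V_t)\ge d_t$ iff some $d_t\times d_t$ minor of $M_t$ is non-zero; I would let $P_t(\alpha)$ be the sum of squares of these minors, or simply the product of all maximal-size minors' detectability indicators; the cleanest choice is to fix one specific minor whose non-vanishing I can verify.

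The crux is Step~4: showing $P_t\not\equiv 0$ for each $t$. By Menger's theorem (equivalently, max-flow min-cut applied to unit-capacity edges), there are $d_t$ edge-disjoint paths $\pi_1,\dots,\pi_{d_t}$ from $s$ to $t$, each originating at a distinct imaginary source edge. I would exhibit one concrete assignment of the $\alpha_{e',e}$ that makes $d_t$ of the vectors $\{f_e:e\in In(t)\}$ linearly independent: along each path $\pi_i$ set the local coefficient from the incoming path-edge to the outgoing path-edge equal to $1$, set all other coefficients touching edges of $\bigcup_j\pi_j$ to $0$, and assign the remaining free coefficients arbitrarily (say $0$). Acyclicity and edge-disjointness ensure that the final edge of $\pi_i$ arriving at $t$ carries exactly the $i$-th unit vector, so the matrix $M_t$ has a $d_t\times d_t$ identity submatrix and hence rank $d_t$ at this assignment. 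Therefore the polynomial $P_t$ (chosen to be the corresponding $d_t\times d_t$ minor) evaluates to $\pm 1$ and is in particular not identically zero. I expect this to be the main obstacle, since the delicate point is ensuring that the path-based "on/off'' assignment of local coefficients does not create unintended cancellations outside the paths; acyclicity and edge-disjointness are exactly what rule this out.

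Finally, form $P(\alpha)=\prod_{t\in T}P_t(\alpha)$. As a product of non-zero polynomials over the integral domain $F[\alpha]$ (where $F$ is a large enough extension), $P$ is not identically zero. Its total degree is bounded by a function of $|\cE|$, $w$, and $|T|$ (each $f_e$ is multilinear and of degree at most the longest path length, and each $P_t$ is a $d_t\times d_t$ determinant of such entries). By the Schwartz--Zippel lemma, for any field $F$ whose cardinality exceeds this degree bound there exists $\alpha^\star\in F^{|\alpha|}$ with $P(\alpha^\star)\neq 0$; equivalently $P_t(\alpha^\star)\neq 0$ simultaneously for all $t\in T$. Setting the local coding coefficients to $\alpha^\star$ produces a $w$-dimensional linear network code in which $\mathrm{rank}(V_t)=d_t=\min\{w,\mathrm{maxflow}(s,t)\}$ at every non-source node $t$, which is the desired linear broadcast code.
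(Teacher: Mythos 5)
Your proof is correct. Note first that the paper itself offers no proof of this statement -- it is imported verbatim from Yeung's textbook -- so there is nothing internal to compare against; what you have written is the standard Koetter--M\'edard algebraic argument (indeterminate local coefficients, a non-vanishing minor per terminal certified by a Menger routing solution, a product polynomial, and Schwartz--Zippel), and it is a legitimate self-contained proof of the corollary. Two small remarks. First, your parenthetical suggestion of taking $P_t$ to be a ``sum of squares of minors'' would fail over fields of characteristic $2$ (where $a^2+b^2=(a+b)^2$ can vanish with $a=b\neq 0$), precisely the fields the paper uses; you correctly discard that option in favor of fixing the single minor indexed by the terminal edges of the $d_t$ disjoint paths and the $d_t$ source coordinates they originate from, which your routing assignment makes a permutation matrix. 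Second, your conclusion asserts $\mathrm{rank}(V_t)=d_t$ while the argument only establishes $\mathrm{rank}(V_t)\ge d_t$; the matching upper bound $\mathrm{rank}(V_t)\le\min\{w,\mathrm{maxflow}(s,t)\}$ holds for \emph{every} linear network code (the vectors live in $F^w$, and every $f_e$ with $e\in In(t)$ lies in the span of the coding vectors on any $s$--$t$ cut), so this is immediate, but it deserves one sentence. For what it is worth, the proof in the cited reference can also be run constructively: a greedy edge-by-edge assignment choosing each local coefficient vector outside a union of at most $|T|$ proper subspaces, which yields the sharper field-size requirement $|F|>|T|$ rather than a bound growing with the degree of the product polynomial; your Schwartz--Zippel route trades that tightness for a shorter argument, which is perfectly adequate for the ``sufficiently large $F$'' phrasing of the corollary.
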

In words, for any dimension $w$, there exist a large enough base field $F$ and a linear network code such that for any non-source node $t$, we have $f_t(\bs) = \bs M_t$, where $\bs$ is the $w$ dimensional source vector over $F$ and $M_t$ is a $w \times |In(t)|$ matrix over $F$, whose columns are the global coding vectors $\{f_e:e\in In(t)\}$. Furthermore, $\text{rank}(M_t)=\min \{w,\mathrm{maxflow}(s,t)\}$.

As the base field $F$ should be large enough to allow for the matrices $\{M_t\}$ to satisfy Corollary \ref{cor. broadcast code}, one cannot assume that the source alphabet, $\cX$ is of the same size as $F$. From this point on, we assume that $\cX=\{0,1\}$ and that $F=\text{GF}(2^m)$ for sufficiently large $m$. In this case, $wm$ consecutive input bits are mapped to a source vector $\bs$. 
\begin{proposition}\label{prop. can get enough equations}
Let $\cN=(\cV,\cE,s,T)$ be a network with a source $s\in \cV$ and terminals $T \subseteq \cV \setminus s$. Assume $(\cV,\cE)$ is a directed acyclic graph and that $c(e)=1$ for any $e\in \cE$. Set $w = \max_{t\in T} \mathrm{maxflow}(s,t)$ and a sufficiently large $m$. Then assuming bits $b_1,\ldots,b_{mw}$ are available at the source $s$, there exists a linear network code over base field $\text{GF}(2^m)$ such that any terminal $t \in T$ receives $m\cdot \mathrm{maxflow}(s,t)$ linearly independent equations on $b_1,\ldots,b_{mw}$.
\end{proposition}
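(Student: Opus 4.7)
}

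The plan is to invoke Corollary \ref{cor. broadcast code} to obtain a suitable network code and then translate the rank guarantee from $\text{GF}(2^m)$ down to $\text{GF}(2)$. First I would pick $m$ large enough that a $w$-dimensional linear broadcast code exists over $F=\text{GF}(2^m)$, which is possible by Corollary \ref{cor. broadcast code}. I would then identify $\text{GF}(2^m)$ with $\{0,1\}^m$ by fixing a basis of $\text{GF}(2^m)$ as an $m$-dimensional vector space over $\text{GF}(2)$, and pack the bits $b_1,\ldots,b_{mw}$ into a source vector $\bs\in F^w$, i.e., $m$ consecutive bits per coordinate of $\bs$.

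By the broadcast property, for every $t\in T$ the received vector is $f_t(\bs)=\bs M_t$ where $M_t$ is a $w\times |In(t)|$ matrix over $F$ with $\text{rank}_F(M_t)=\min\{w,\mathrm{maxflow}(s,t)\}$. Because $w$ was set to $\max_{t\in T}\mathrm{maxflow}(s,t)$, the minimum is always $\mathrm{maxflow}(s,t)$. Restricting to $\mathrm{maxflow}(s,t)$ linearly independent columns of $M_t$, each terminal has access to $\mathrm{maxflow}(s,t)$ linearly independent $F$-linear equations in the unknowns $s_1,\ldots,s_w\in F$.

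The remaining step, and the only non-bookkeeping point in the proof, is to convert this into $m\cdot\mathrm{maxflow}(s,t)$ linearly independent $\text{GF}(2)$-linear equations in the bits $b_1,\ldots,b_{mw}$. Under the chosen basis, multiplication by any $\alpha\in F$ is a $\text{GF}(2)$-linear map $\{0,1\}^m\to\{0,1\}^m$ represented by an $m\times m$ binary matrix $[\alpha]$. Expanding each column of the selected submatrix of $M_t$ in this way yields a binary matrix $\widetilde{M}_t$ of size $mw\times m\cdot\mathrm{maxflow}(s,t)$, and $\bs M_t$ viewed over $\text{GF}(2)$ is exactly $\mathbf{b}\widetilde{M}_t$ where $\mathbf{b}=(b_1,\ldots,b_{mw})$. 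Since $F^{\mathrm{maxflow}(s,t)}$ is an $m\cdot\mathrm{maxflow}(s,t)$-dimensional $\text{GF}(2)$-space and the restriction-of-scalars image of a full-rank $F$-linear map is all of $F^{\mathrm{maxflow}(s,t)}$, the $\text{GF}(2)$-rank of $\widetilde{M}_t$ equals $m\cdot\mathrm{maxflow}(s,t)$, as required.

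The main obstacle is exactly this last rank-preservation argument: one must verify that the ``blow-up'' from $F$-linear equations to $\text{GF}(2)$-linear equations cannot lose rank, which rests on the elementary but essential fact that a $k$-dimensional $\text{GF}(2^m)$-subspace has $\text{GF}(2)$-dimension $mk$. Everything else is immediate from Corollary \ref{cor. broadcast code} and the choice of $w$.
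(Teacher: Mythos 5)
Your proposal is correct and follows essentially the same route as the paper: invoke Corollary \ref{cor. broadcast code} to get a linear broadcast code over $\text{GF}(2^m)$, then translate the full-rank system of $F$-linear equations into $\text{GF}(2)$-linear equations on the bits and verify that rank is preserved under this base-field restriction. The only cosmetic difference is that you argue rank preservation via surjectivity of the restriction-of-scalars map onto $F^{\mathrm{maxflow}(s,t)}$, whereas the paper's Claim \ref{claim linear} carries out the polynomial-representation arithmetic explicitly and argues that the $mr$ binary equations must be independent because they determine the $mr$ binary unknowns; the two arguments are equivalent.
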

In other words, if bits $b_1,\ldots,b_{mw}$ are available at the source, a terminal $t$ with $\mathrm{maxflow}(s,t)=l \leq w$, can receive (possibly after linear operations at that terminal) a vector of $ml$ bits, $\by = (\tilde{b}_1,\ldots,\tilde{b}_{ml})$ which is the result of $(b_1,\ldots,b_{mw}) B$, for $B \in \{0,1\}^{wm \times lm}$ with rank $ml$.  
The proof of Proposition \ref{prop. can get enough equations} is rather technical and appears in Appendix \ref{app prop can get enough equations}.
%%%%%%%%%%%%%%%%%%%%%
\paragraph{Linear Codes for Structured Binning:}
Consider the source coding problem in Figure \ref{fig:slepian_wolf}. By \cite{SlepianWolf73}, the set of achievable rates is given by $R_1 \geq \h[X|Y]$, $R_2 \geq \h[Y|X]$ and $R_1+R_2 \geq \h[X,Y]$. In the particular case where the side information $Y$ is available at the decoder (Figure \ref{fig:asym_slepian_wolf}), a rate $\h[X|Y]$ is required. Now, assume $X$ is a binary symmetric random variable, related to $Y$ through a binary symmetric channel. That is, $Y=X \oplus E$, where  $E$ is a binary random variable with $\prob(E=1)=p$ and $\oplus$ denotes XOR operation. Note that in this model $\h[X|Y]=h(p)$, where $h$ is the binary entropy function $h(p)=-p\log p -(1-p)\log(1-p)$. In this case, a coding scheme based on linear codes was suggested by Wyner \cite{wyner74recent} (see \cite{zamir2002nested} also).
% slepian wolf
\def\FIGB{
\begin{figure}
\centering
\subfigure[The Slepian-Wolf network \cite{SlepianWolf73}.]{
\includegraphics[scale=0.70]{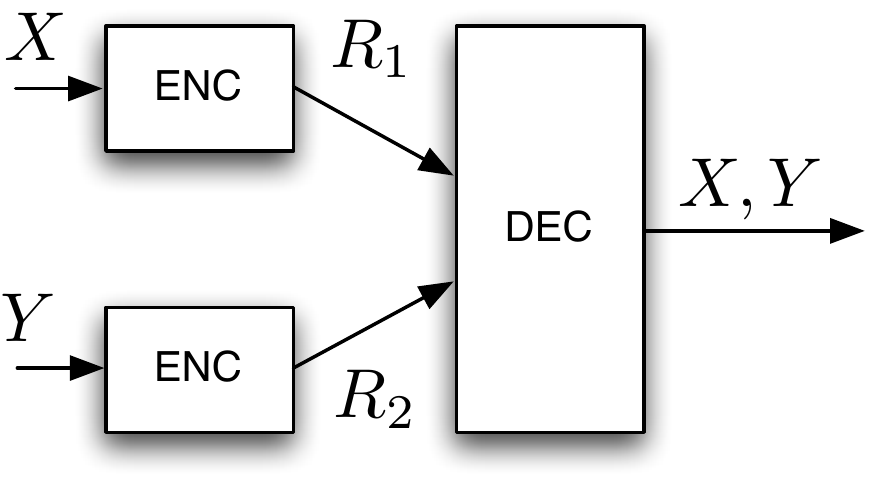}
\label{fig:slepian_wolf}
}
% asymmetric slepian wolf
\subfigure[The asymmetric Slepian-Wolf network.]{
\includegraphics[scale=0.75]{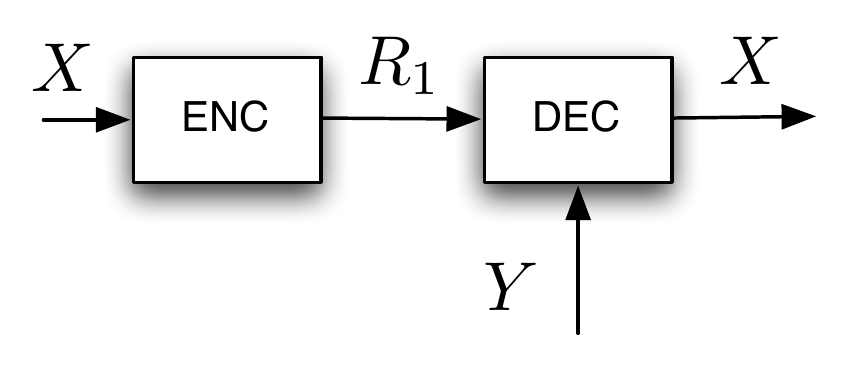}
\label{fig:asym_slepian_wolf}
}
\caption{(a) Upper encoder has the source $X$, lower one has the side information $Y$, which is, in general, correlated with $X$. We are interested in the set of rates $(R_1,R_2)$ such that both $X$ and $Y$ can be reconstructed at the decoder. (b) The encoder describes the source $X$ to a decoder which has side information $Y$ available. We are interested in the rate $R_1$ such that $X$ can be reconstructed at the decoder.}
\end{figure}
}
\def\FIGHandB{
\begin{figure}
\centering
\subfigure[A network with correlated sources \cite{Ho04networkcoding}.]{
\includegraphics[scale=0.44]{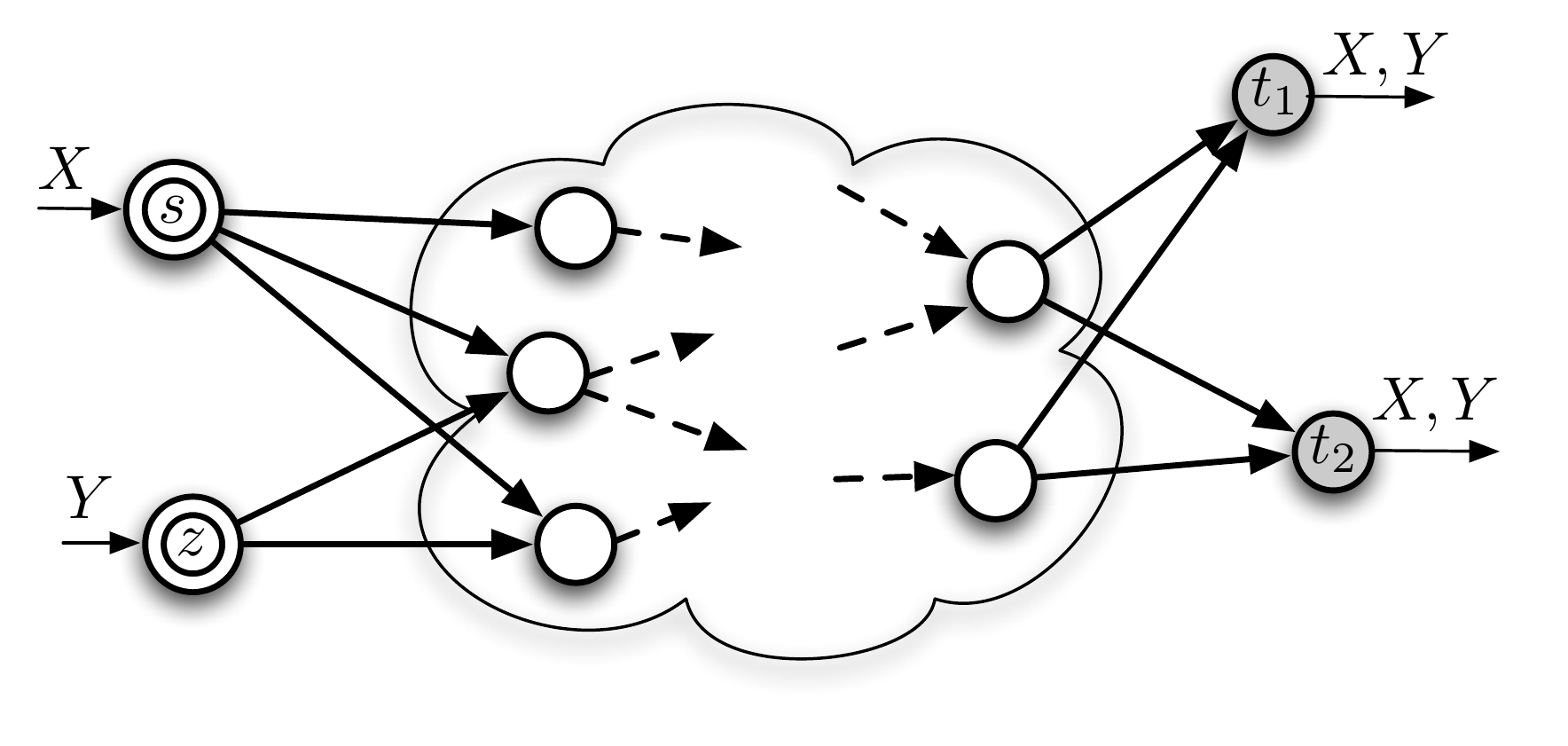}
\label{fig:Ho05}
}
% asymmetric slepian wolf
\subfigure[A network with correlated sources and side information \cite{BakshiEffros08}.]{
\includegraphics[scale=0.44]{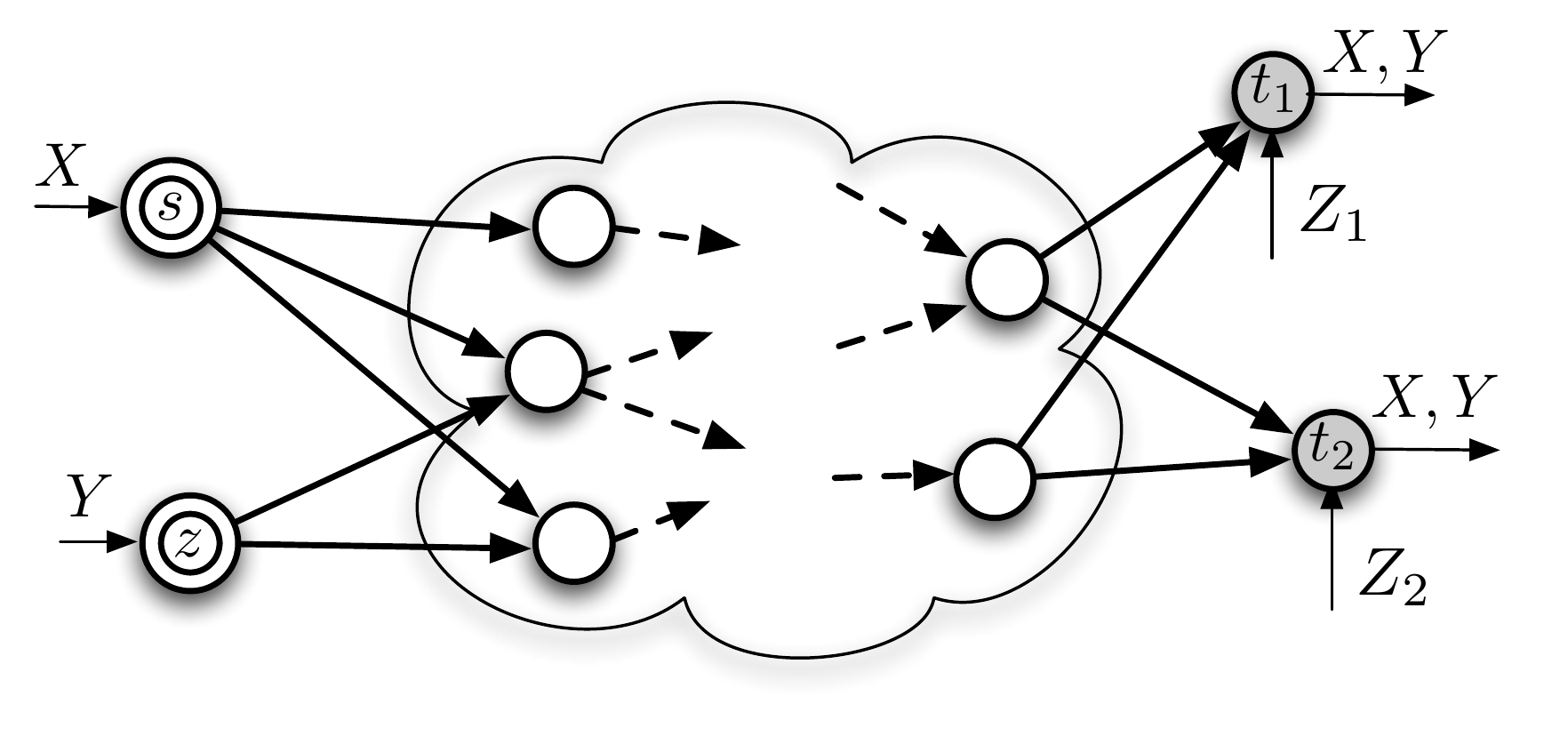}
\label{fig:Bakshi08}
}
\caption{(a) Nodes $s$ and $z$ have sources $X$ and $Y$ available. In general, $X$ and $Y$ are correlated. The sources are demanded (loselessly) at the terminals $t_1$ and $t_2$. (b) In this case, terminals $t_1$ and $t_2$ have additional side information available to them, $Z_1$ and $Z_2$, respectively.}
\end{figure}
}
\begin{fact}\label{fact wyner scheme}
Let $\epsilon>0$ be arbitrary. For $n$ sufficiently large, there exists an $n\times m$ matrix $H$ with $m < (h(p)+\epsilon)n$ and a function $f:\{0,1\}^m \mapsto \{0,1\}^n$ such that $\prob(f(e^nH)\ne e^n)\leq\epsilon$.
\end{fact}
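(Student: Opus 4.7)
The plan is to prove this classical Wyner-type result via a standard random coding argument: choose $H$ uniformly at random among $n \times m$ binary matrices and show that both encoding (left-multiplication by $H$) and a minimum-entropy / typical-set decoder succeed with high probability, then invoke averaging to extract a deterministic matrix.

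More concretely, set $m = \lceil n(h(p)+\epsilon/2)\rceil$, so for $n$ large enough $m<(h(p)+\epsilon)n$. Define the decoder $f$ as follows: given a syndrome $s\in\{0,1\}^m$, if there is a unique vector $\hat{e}^n$ in the strongly typical set $A_\delta^{(n)}$ with respect to $\mathrm{Bern}(p)$ satisfying $\hat{e}^nH=s$, output $\hat{e}^n$; otherwise declare failure (say, output $0^n$). The error event is contained in $E_1\cup E_2$ where $E_1=\{e^n\notin A_\delta^{(n)}\}$ and $E_2=\{\exists\,\tilde{e}^n\in A_\delta^{(n)},\ \tilde{e}^n\ne e^n,\ \tilde{e}^nH=e^nH\}$. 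By the AEP, $\prob(E_1)\le \epsilon/2$ for $n$ large and $\delta$ small.

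To bound $\prob(E_2)$ averaged over the random choice of $H$, fix any $e^n$ and any $\tilde{e}^n\ne e^n$; then $v\defined \tilde{e}^n-e^n$ is a fixed nonzero vector in $\{0,1\}^n$, and because $H$ is uniform, $vH$ is uniform on $\{0,1\}^m$, giving $\prob_H(vH=0)=2^{-m}$. Using $|A_\delta^{(n)}|\le 2^{n(h(p)+\delta)}$, a union bound yields
\begin{equation*}
\E_H[\prob(E_2 \mid e^n)] \le 2^{n(h(p)+\delta)}\cdot 2^{-m} \le 2^{-n(\epsilon/2 - \delta)},
\end{equation*}
which tends to $0$ for $\delta<\epsilon/2$. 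Hence the overall error probability, averaged over $H$, is at most $\epsilon/2 + o(1) < \epsilon$ for $n$ large, so there must exist a deterministic $H$ achieving this bound.

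There is no real obstacle here: the only subtle point is that one needs the \emph{pairwise} uniformity of $vH$ (which follows immediately from uniform i.i.d.\ entries of $H$) in order for the standard random coding union bound to work with a linear code. Neither Shannon-style joint typicality nor anything beyond the binary entropy rate is required, because the syndrome decoder only needs to distinguish $e^n$ from other typical candidates having the same syndrome. All the heavy lifting is done by (i) the AEP bound on $|A_\delta^{(n)}|$ and (ii) the elementary fact that $\prob_H(vH=0)=2^{-m}$ for any fixed $v\ne 0$.
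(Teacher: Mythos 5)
Your proof is correct: the paper states this as a Fact without proof, citing Wyner's classical result, and your argument is exactly the standard random linear binning construction behind that citation (uniform random $H$, typical-set syndrome decoding, pairwise uniformity of $vH$ for $v\ne 0$, union bound over the typical set, then averaging to extract a deterministic $H$). No gaps; the rate bookkeeping $m=\lceil n(h(p)+\epsilon/2)\rceil < (h(p)+\epsilon)n$ and the choice $\delta<\epsilon/2$ are handled properly.
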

The source vector $x^n$ is thus encoded as $x^nH$, while, at the decoder, $e^nH$ is calculated according to $e^nH = x^nH\oplus y^nH$ (as $y^n$ is available) and $x^n$ is then reconstructed using $x^n = y^n \oplus f(He^n)$. Hence, the decoding process is analogous to the decoding of a linear code with a parity check matrix $H$. For example, in \cite{liveris2002compression}, LDPCs are used. As a result, $H$ is also required to be sparse and of a certain structure.   
%%%%%%%%%%%%%%%%%%%%%%%%%%%%%%%%%%%%%%%%%%%%%%%%%%%%%%%%%%%%%%%%%%%%%%%%
\paragraph{Distributed Source Coding with Multiple Terminals:}
We now turn to our original problem. In terms of the achievable rates alone (regardless of the possibility for efficient decoding) the problem can be viewed as multicast in the presence of side information. The following can be seen as a corollary of \cite{BakshiEffros08}.
\begin{corollary}\label{cor. rates}
Let $\cN=(\cV,\cE,s,T)$ be a network with a source $s\in \cV$ and terminals $T \subseteq \cV \setminus s$. Assume $(\cV,\cE)$ is a directed acyclic graph. Let source $X$ be available at $s$ and for each $t$, side information $Y^t$ be available at $t\in T$. Then, a necessary and sufficient condition for $X$ to be reconstructed at the terminals with an arbitrarily small probability of error is $\mathrm{maxflow}(s,t) \geq \hxyt$ for all $t$.
\end{corollary}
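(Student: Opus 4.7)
I would split the argument into a converse (necessity) and an achievability part (sufficiency), with the latter tying directly to Proposition \ref{prop. can get enough equations} and classical random linear binning.

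\emph{Necessity.} This is a standard cut-set argument. Fix $t\in T$ and any edge cut $C$ separating $s$ from $t$. The terminal input $f_t(X^n)$ is a deterministic function of the signals on edges crossing $C$, whose joint entropy is at most $n\sum_{e\in C}c(e)$. Since $t$ reconstructs $X^n$ with vanishing error probability from $(f_t(X^n),Y^{t,n})$, Fano's inequality gives
\begin{equation*}
n\,\hxyt \;=\; H(X^n\mid Y^{t,n}) \;\le\; H\big(f_t(X^n)\mid Y^{t,n}\big) + n\delta_n \;\le\; n\sum_{e\in C}c(e) + n\delta_n,
\end{equation*}
with $\delta_n\to 0$. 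Dividing by $n$, letting $n\to\infty$, and minimizing over all cuts $C$ yields $\mathrm{maxflow}(s,t)\ge \hxyt$.

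\emph{Sufficiency.} Let $w=\max_{t\in T}\mathrm{maxflow}(s,t)$ and take a large block length $n$. By Proposition \ref{prop. can get enough equations} (with $m$ chosen large enough that a single algebraic network use accommodates the $n$-block of source symbols), there is a linear broadcast code over $\text{GF}(2^m)$ such that if the source injects $nw$ bits into the network, each terminal $t$ receives $n\cdot\mathrm{maxflow}(s,t)$ linearly independent combinations, captured by a fixed matrix $B_t\in\{0,1\}^{nw\times n\cdot\mathrm{maxflow}(s,t)}$ of full column rank. At the source, draw a uniformly random binary matrix $A\in\{0,1\}^{n\times nw}$ and transmit $X^nA$. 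Terminal $t$ then observes $X^n A B_t$ together with $Y^{t,n}$, and decodes by joint-typicality (equivalently, minimum-entropy) decoding of $X^n$.

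Because $A$ is uniform and $B_t$ has full column rank over $\text{GF}(2)$, the product $AB_t$ is uniformly distributed in $\{0,1\}^{n\times n\cdot\mathrm{maxflow}(s,t)}$; hence $X^nAB_t$ is a random linear Slepian--Wolf bin of $X^n$ at rate $\mathrm{maxflow}(s,t)$. Standard random-binning error analysis \cite{SlepianWolf73} then shows that terminal $t$ recovers $X^n$ with probability $1-o(1)$ whenever $\mathrm{maxflow}(s,t)>\hxyt$, which is guaranteed by hypothesis after taking an arbitrarily small margin $\epsilon\to 0$. A union bound over the finitely many terminals, followed by the standard expurgation argument, fixes a deterministic code with the desired performance.

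\emph{Main obstacle.} The delicate point is that a single randomization performed at the source must induce a valid random binning at \emph{every} terminal simultaneously, even though each terminal observes a different, deterministic linear projection $B_t$ of the source description dictated by the network code. This is resolved by the elementary observation that right-multiplication of a uniformly random binary matrix by a full-column-rank matrix preserves uniformity, ensuring that the effective binning matrix $AB_t$ at every terminal has exactly the distribution required for the Slepian--Wolf error bound to apply.
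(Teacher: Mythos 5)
Your proposal is correct and follows essentially the same route the paper intends: the converse is the standard cut-set/Fano argument, and the achievability is exactly the scheme the paper sketches after the corollary (and attributes to Bakshi--Effros), namely random binning at the source combined with a linear broadcast code via Proposition~\ref{prop. can get enough equations}, so that each terminal $t$ receives $n\cdot\mathrm{maxflow}(s,t)$ independent equations on the bin description and decodes by joint typicality. Your refinement of using \emph{linear} binning $X^nA$ and observing that $AB_t$ remains uniform when $B_t$ has full column rank is a clean way to make precise the paper's informal claim that the binning is ``oblivious to the network code,'' and the only (shared) imprecision is the usual boundary issue of achieving $\mathrm{maxflow}(s,t)=\hxyt$ with equality rather than with an $\epsilon$ margin.
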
 
A possible scheme for achieving the bound in Corollary \ref{cor. rates} is \emph{random binning} at the source with \emph{linear network coding} \cite{BakshiEffros08}. In this case, it is easy to construct a source coding scheme which is oblivious to the network code, in the sense that a terminal $t$ does not care whether it receives the $k$ most significant bits of a bin index, or any $k$ linearly independent equations on the $n>k$ bits of the bin index. Another possible scheme uses linear coding and minimum entropy decoding \cite{Ho04networkcoding}. Note, however, that the decoding complexity of both schemes is exponential in the block length. Should, however, one try to use linear codes with a structure that facilitates efficient decoding, for more than two terminals the structure of the code will be destroyed by the network code. 

To see this, assume bits $x^n$ are available at the source. Denote by $w_t$ the max-flow to terminal $t$ and by $w$ the maximal max-flow among all terminals, $\max_t w_t$. By Corollary \ref{cor. rates}, there exists a (joint) network-source coding scheme for which each terminal $t$ can decode $x^n$ with an arbitrarily small (as $n \to \infty$) error probability as long as $w_t \geq  \hxyt + \epsilon$. However, due to the lack of network-source coding separation \cite{Ramamoorthy_et_al06}, even if source node $s$ creates $K$ nested codewords, were the codeword intended for $t$ is of length $n \hxyt$ (assuming $n \hxyt$ is an integer) and is a subset of the codewords for terminals with higher rates, in general, there does not exist a network code to ensure that each terminal $t$ indeed receives the subset of bits \emph{intended to it} while satisfying the min-cut bounds (this can be done, in general, for at most two terminals \cite{Erez_Feder03}). By Proposition \ref{prop. can get enough equations}, however, there exists a network code for which terminal $t'$ does receive $n H(X|Y^{t'})$ \emph{linearly independent equations} on the $n \max_t \hxyt$ transmitted bits. In other words, source node $s$ can generate a single codeword of length $n \max_t \hxyt$, using a good parity check matrix $H$ of size $n \times  n \max_t \hxyt$. Denote this codeword by $\bc = x^n H$.
$\bc$ is hence of length $n \max_t \hxyt$. Each $m$ bits in $\bc$ are mapped to a symbol in $\text{GF}(2^m)$, and a vector of $w$ symbols, $\bs$, is transmitted through the network. At terminal $t$, the received vector is $\by = \bs M_t$. We assume $wm$ divides $n \max_t \hxyt$, so $\bc$ might be transmitted using several network uses. By Proposition \ref{prop. can get enough equations}, the $n \hxyt$ \emph{bits} received at $t$ can be represented as $(b_1,\ldots,b_{n \hxyt}) = x^n H B_t$, where $B_t$ is an $n \max_{t'} H(X|Y^{t'}) \times n \hxyt$ matrix with full rank (if $|In(t)| > w_t$, one can take $w_t$ linearly independent coding vectors and discard the rest).

While $H$ is a parity-check matrix designed to facilitate efficient decoding of $e^n$ from $e^nH$, $B_t$ is a matrix \emph{defined by the network code}, and may be different for each terminal $t$. As a result, even if $H$, for example, is sparse, $HB_t$ might not be sparse at all and lack any structure that allows efficient decoding. This is the reason we say that the structure of the source code is shattered by the network code.
%%%%%%%%%%%%%%%%%%%%%%%%%%%%%%%%%%%%%%%%%%%%%%%%%%%%%%%%%%%%%%%%%%%%%%%%
\section{Efficient Decoding Using LDPC}\label{main}
The focus is thus on designing $H$ (the source code), a linear network code with matrices $B_t$, $1 \leq t \leq K$, and possibly post-processing matrices $P_t$, such that, for each terminal $t$, 
$\bar{H}_t = H B_t P_t$
is a good parity-check matrix for terminal $t$, with a structure that facilitates efficient decoding. In this section, we center our attention on $H$ and the $P_t$'s, and derive sufficient conditions to guarantee that each terminal $t$ sees, with high probability, a low-density parity-check matrix $\bar{H}_t$, assuming $B_t$ is the result of either deterministic or random linear network coding.

At the heart of the suggested scheme is a matrix sparsification algorithm and its analysis for high rates (almost square matrices). In short, for each $H B_t$, we wish to find a matrix $P_t$ such that $H B_t P_t$ is sparse. In \cite{neylon2006sparse,gottlieb2010matrix}, the authors consider randomized matrix sparsification algorithms. If $A$ is an $n \times (n-k)$ matrix, the authors seek an $(n-k) \times (n-k)$ matrix $P$ such that $A P$ is sparse. The analysis therein, however, aims at bounding the difference between the sparsity achieved by the algorithm and the best possible performance, without quantifying the best possible performance directly. In the next subsection, we show a novel connection between matrix sparsification and rate distortion theory. Through this connection, we are able to give bounds on the best possible performance directly, and, as a result, give sufficient conditions under which one can indeed find such $P$ which yields a sparse matrix $A P$. At the basis of our results are Lemmas \ref{lem lower bound} and \ref{lem achievability mat spars}, which give lower and upper bounds on the sparsification performance. Theorem \ref{theo high rate} below, our main result in this section, utilizes these lemmas to show that indeed, as long as the difference in the strength of the side information (that is, conditional entropies) available at the nodes is small, designing joint network-source codes which induce low-density parity-check matrices at the terminals is possible.
%%%
\begin{theorem}\label{theo high rate}
Let $(\cV,\cE,s,T)$ be a network with binary uniform source $X$ available at node $s$ and terminals $t \in T$ with side information $Y^t$. Assume $\mathrm{maxflow}(s,t) \geq R_t = \hxyt+\epsilon$. Then, for large enough block length $n$, it is possible to construct a joint source and network code such that each terminal can decode $X$ using a syndrome resulting from a parity check matrix $\bar{H}_t$ with normalized density at most $D(\max_{t'}R_{t'}-R_t)+O(\frac{\log n}{n})$, where $D(\cdot)$ is the distortion rate function of a binary uniform random variable under Hamming distortion. 
\end{theorem}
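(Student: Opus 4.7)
The plan is to realize the joint code as $\bar H_t = H B_t P_t$ and bound the density of each $\bar H_t$ via the matrix sparsification framework of Lemmas \ref{lem lower bound} and \ref{lem achievability mat spars}.

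We begin by invoking Proposition \ref{prop. can get enough equations} with $w = n \alpha / m$ for $\alpha = \max_{t'} R_{t'}$ and $m$ sufficiently large. Since $\mathrm{maxflow}(s,t) \geq R_t$ for every terminal, the proposition delivers a linear broadcast code for which terminal $t$ receives $n R_t$ linearly independent binary equations on the $n \alpha$ source bits. Equivalently, each terminal $t$ collects data of the form $(b_1, \dots, b_{n\alpha}) B_t$ for some full-rank matrix $B_t \in \{0,1\}^{n\alpha \times n R_t}$. Next, we pick $H \in \{0,1\}^{n \times n\alpha}$. It suffices to take $H$ as a random matrix of this size: by Fact \ref{fact wyner scheme} applied at the maximum rate $\alpha$, such an $H$ is a valid Slepian-Wolf syndrome matrix with error at most $\epsilon$ for the terminal with the largest conditional entropy, and a fortiori for all other terminals since $nR_t \geq n \hxyt$. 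Finally, for each terminal $t$, we apply Lemma \ref{lem achievability mat spars} to $H B_t$, obtaining an invertible $P_t \in \{0,1\}^{nR_t \times nR_t}$ that attains the density upper bound given by the lemma.

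Correctness of decoding follows because $P_t$ is invertible: the column space of $\bar H_t$ equals that of $H B_t$, so the syndrome $e^n \bar H_t$ (with $e^n = x^n \oplus y^{t,n}$) that terminal $t$ computes is informationally equivalent to a valid $nR_t$-bit Slepian-Wolf syndrome of $e^n$. Standard arguments (cf.\ Fact \ref{fact wyner scheme}) then yield vanishing decoding error as $n \to \infty$. For the density bound, Lemma \ref{lem achievability mat spars} translates the sparsification of $H B_t$ into a rate-distortion statement: finding a low-weight basis of the column space of $H B_t$ amounts to selecting low-weight elements of an $nR_t$-dimensional subspace of $\{0,1\}^n$ that sits inside the $n\alpha$-dimensional column space of $H$. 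The distortion-rate function of a binary uniform source under Hamming distortion gives the optimal trade-off; when properly exploited, the slack between $n\alpha$ and $nR_t$ translates to the claimed bound $D(\alpha - R_t) + O(\log n / n)$, with the logarithmic correction accounting for the finite blocklength and the need for $n R_t$ linearly independent columns rather than a single low-weight vector.

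The hardest part will be identifying the correct rate parameter $\alpha - R_t$ (rather than the coarser $1 - R_t$ or $R_t$ one obtains for generic matrices) in the sparsification bound. This refinement hinges on the fact that the column space of $H B_t$ lies inside the $n\alpha$-dimensional column space of $H$ rather than the whole of $\{0,1\}^n$: among the roughly $2^{n(\alpha + h(d) - 1)}$ vectors of weight at most $nd$ that a typical rate-$\alpha$ code supports, one can extract $nR_t$ linearly independent ones as soon as this count exceeds $2^{nR_t}$, which happens precisely at $d = D(\alpha - R_t) + O(\log n / n)$. Carrying this heuristic through rigorously, while ensuring the extracted low-weight codewords span the prescribed column space of $H B_t$ for each terminal $t$ via a probabilistic choice of $H$ and a union bound over $|T|$, will constitute the technical core of the proof.
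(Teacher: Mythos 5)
Your overall architecture is the same as the paper's: realize $\bar H_t = HB_tP_t$, obtain a full-rank $B_t$ from Proposition~\ref{prop. can get enough equations}, and produce $P_t$ by matrix sparsification via Lemma~\ref{lem achievability mat spars}. However, you skip the step that licenses that lemma. Lemma~\ref{lem achievability mat spars} requires its input matrix to be uniform i.i.d., and the matrix being sparsified is $HB_t$, not $H$; the paper's proof therefore opens with Proposition~\ref{prop properties}, whose second item shows that a full-rank $B_t$ together with a uniform $H$ yields a uniform i.i.d.\ $HB_t$, and whose remaining items provide the alternative branch (choosing $H$ with low density $\lambda/n$ so that $HB_t$ is \emph{already} sparse when the columns of $B_t$ are sparse). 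You never invoke Proposition~\ref{prop properties}, and without it the application of Lemma~\ref{lem achievability mat spars} to $HB_t$ is unjustified.

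The more serious problem is the step you defer as ``the technical core'': extracting $nR_t$ independent low-weight vectors from the $n\alpha$-dimensional column space of $H$. This is not merely hard to make rigorous --- it fails. The columns of $\bar H_t$ must span the \emph{specific} $nR_t$-dimensional subspace $\mathrm{col}(HB_t)$ dictated by the network code, not an arbitrary $nR_t$-dimensional subspace of $\mathrm{col}(H)$; the freedom your count exploits is exactly the freedom the network code removes. Indeed, for the uniform $H$ that your estimate $2^{n(\alpha+h(d)-1)}$ presupposes, $HB_t$ is uniform i.i.d., and Lemma~\ref{lem lower bound} then lower-bounds the achievable normalized density of $HB_tP_t$ by $D(R_t)$. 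Since $D(\cdot)$ is decreasing, $D(R_t) > D(\max_{t'}R_{t'}-R_t)$ whenever $R_t < \tfrac{1}{2}\max_{t'}R_{t'}$, so in that regime your claimed density would contradict the paper's own converse. What the direct application of Lemma~\ref{lem achievability mat spars} (equivalently, Algorithm~\ref{alg mat_spar}) to the $n\times nR_t$ matrix actually delivers is a density governed by $D(R_t)+O(\frac{\log n}{n})$, which is the quantity the paper's proof works with; the disparate-rate regime is addressed there not by sparsification but through the sparse-$H$ items of Proposition~\ref{prop properties}. A smaller issue: decoding correctness requires $HB_t$ (not $H$) to be a good rate-$R_t$ syndrome former for terminal $t$'s error vector, and your ``a fortiori'' appeal to Fact~\ref{fact wyner scheme} at rate $\alpha$ does not give that without an additional argument.
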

This result gives a vanishing density whenever the strength of the side information at the nodes is approximately the same. In particular, at the limit of high rates, a vanishing density can be achieved. A small (but still useful for efficient decoding) density can be achieved when the variation in the rates is slightly larger. Moreover, note that a naive matrix sparsification based on Gauss elimination would result only in a linear rate-sparsity trade-off, as the lower part of the matrix (of size $n(1-R) \times nR$) is arbitrary. For $R \to 1$, the improvement over Gauss elimination is arbitrarily large. To see this, note that the derivative of $D(R)$ tends to $0$ as $R\to 1$, compared to the $1/2$ rate achieved by Gauss elimination. For example, taking $R = 1-O(\frac{(\log n)^2}{n})$ results in a density of $O(\frac{\log n}{n})$. All this can be clearly seen from Figure \ref{fig:approach_rd}, which compares $D(R)$ to Gauss elimination, together with the results of the sparsification algorithm we used.

Theorem \ref{theo high rate} results from analyzing both the properties of the joint network-source code $HB_t$ seen by terminal $t$, and the best possible sparsification performance. The first part is rather technical and is summarized in Proposition \ref{prop properties} at the end of this section. We now focus on the second part and show that, at least in the randomized setting, performance guarantees for matrix sparsification can indeed be achieved via the distortion-rate function. 
\subsection{Analyzing Matrix Sparsification Via Rate-Distortion}
Matrix sparsification is formally defined as follows.
\begin{problem}\label{prob mat spar}
Given a full rank matrix $A \in \text{GF}(q)^{n\times (n-k)}$, find and invertible matrix $P \in \text{GF}(q)^{(n-k)\times (n-k)}$ which minimizes the number of non-zeros (nnz) in $AP$.
\end{problem}
In \cite{neylon2006sparse,gottlieb2010matrix} the authors prove that matrix sparsification is NP-hard, and give approximation algorithms. Note that, in our model, matrix sparsification is part of the \emph{code design}, and not the decoding process. Hence, with successful sparsification, efficient decoding is possible throughout the transmission of the (possible many) source blocks. 

The analysis in \cite{neylon2006sparse,gottlieb2010matrix} bounds the performance compared to the optimum, but does not include any guarantee on the achieved sparsity. In the context of our problem, such a guarantee is essential in order to verify that each terminal actually receives a syndrome which can be seen as created by a low-density matrix. 

First, a few definitions are required. For any $a,b \in \text{GF}(q)$ define
$
d(a,b) = \left\{ \begin{array}{ll}
0 & \textrm{if $a=b$}\\ 1 & \textrm{if $a \ne b$}.
\end{array} \right.$
With a slight abuse of notation, for any $\ba,\bb \in \text{GF}(q)^n$ define
$d(\ba,\bb) = \sum_{i=1}^{n}d(\ba(i),\bb(i))$. It is common to denote $d(\ba,\bb)$ as $\mathrm{nnz}(\ba-\bb)$.
At the heart of the matrix sparsification algorithm stands the following problem.
\begin{problem}\label{prob 1}
\textit{Min-Unsatisfy:} For any $A \in \text{GF}(q)^{n\times (n-k)}$ and $\bb \in \text{GF}(q)^n$, find the vector $\bx \in \text{GF}(q)^{n-k}$ which minimizes $d(A\bx,\bb)$.
\end{problem}
Let $A_{-j}$ denote the matrix $A$ with the $j$-th column, $\ba_j$, removed.
Under these definitions, a possible matrix sparsification algorithm has the following form \cite{neylon2006sparse}.
\begin{algorithm}\label{alg mat_spar}
For each $1\leq j \leq m$:
\begin{enumerate}[itemsep=0pt,parsep=0pt,topsep=0pt, partopsep=0pt]
\item Let $\bx = \text{Min-Unsatisfy}(A_{-j},\ba_j)$.
\item Replace the column $\ba_j$ with $\ba_i-A_{-j} \bx$.
\end{enumerate}
\end{algorithm}
The following lemma bounds the performance of \emph{any algorithm} for solving Problem \ref{prob mat spar}. 
% lemma - matrix sparsification
\begin{lemma}\label{lem lower bound} 
Let $A \in \text{GF}(q)^{n\times (n-k)}$ be a random matrix with i.i.d.\ entries. Let $P(A)\in \text{GF}(q)^{(n-k)\times (n-k)}$ be any invertible matrix, whose entries may depend on those of $A$. The expected number of non-zeros in $AP$ satisfies 
$
\frac{1}{(n-k)n}\E \left\{\mathrm{nnz}(AP)\right\} \ge D((n-k)/n)$,  
where $D(\cdot)$ is the distortion-rate function of $A(1,1)$ under Hamming distortion.
\end{lemma}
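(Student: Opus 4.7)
The plan is to lower-bound $\E[\mathrm{nnz}(AP)]$ by sandwiching $H(B|P)$ (with $B:=AP$) between an information-theoretic lower bound in terms of $H(A)$ and a sparsity-based upper bound, then convert the resulting inequality into a rate-distortion statement via the Shannon lower bound.

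First I would establish the identity
\[
H(B|P) \;=\; H(A)-H(P),
\]
which holds because $(A,P)\leftrightarrow(B,P)$ is a bijection (given invertible $P$ one recovers $A=BP^{-1}$) and because $P$ is a deterministic function of $A$, so $H(P|A)=0$. Since $P$ is invertible in $\text{GF}(q)^{(n-k)\times(n-k)}$ there are at most $q^{(n-k)^2}$ possibilities for $P$, giving $H(P)\le (n-k)^2\log q$. Combined with the i.i.d.\ assumption $H(A)=n(n-k)\,H(A(1,1))$, this yields the lower bound $H(B|P)\ge n(n-k)\,H(A(1,1))-(n-k)^2\log q$.

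Next I would upper-bound $H(B|P)$ by exploiting the sparsity of $B$. A standard maximum-entropy computation shows that any random vector in $\text{GF}(q)^\ell$ with expected Hamming weight $\mu$ has entropy at most $\ell\,g(\mu/\ell)$, where $g(D):=h(D)+D\log(q-1)$ (bound each coordinate by a mixture of $0$ and the uniform distribution on the $q-1$ non-zero symbols, then invoke concavity of $h$). I would apply this bound entry-wise to $B$ conditional on each realization of $P$, and then use Jensen's inequality (concavity of $g$) to bring the expectation inside, obtaining
\[
H(B|P) \;\le\; n(n-k)\,g(\bar\mu), \qquad \bar\mu:=\E[\mathrm{nnz}(AP)]/(n(n-k)).
\]

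Combining the two bounds gives $g(\bar\mu)\ge H(A(1,1))-\tfrac{n-k}{n}\log q$. For $A(1,1)$ uniform on $\text{GF}(q)$ (the primary case), $H(A(1,1))=\log q$ and this reads $g(\bar\mu)\ge(k/n)\log q$. The final step is to invoke the Shannon lower bound for Hamming distortion, which is tight for uniform sources: $R(D)=\log q - g(D)$. Under this identity our inequality becomes $R(\bar\mu)\le\tfrac{n-k}{n}\log q$, which (in the $q$-ary-unit normalization of the lemma) is exactly $\bar\mu\ge D((n-k)/n)$. The main technical point to get right is keeping the units of $R$ and $D$ consistent, and verifying that the bound $H(P)\le(n-k)^2\log q$ leaves exactly the correct slack (a looser estimate such as $H(P)\le n^2\log q$ would yield nothing useful); for non-uniform i.i.d.\ entries one has to further argue that the same maximum-entropy function $g$ still matches the true rate-distortion trade-off, which is automatic in the uniform (in particular, binary symmetric) setting that drives the paper's application.
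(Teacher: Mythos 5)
Your argument is correct for the case the paper actually uses (uniform, and more generally binary, entries), but it reaches the bound by a genuinely different route. The paper's proof normalizes $P$ by a diagonal matrix so that each column $\bp_j$ has leading coefficient $1$ in some position $i^*(j)$, writes the $j$-th column of $AP$ as the difference between $\ba_{i^*(j)}$ and a linear combination of the other columns, and then invokes the converse to the rate-distortion theorem column by column, viewing $\ba_{i^*(j)}$ as a source word and the span of $A_{-j}$ as a rate-$(n-k-1)/n$ code; monotonicity of $D(\cdot)$ finishes. You instead run a single global entropy count: $H(AP\mid P)\ge H(A)-H(P)\ge n(n-k)H(A(1,1))-(n-k)^2\log q$, then cap $H(AP\mid P)$ by $n(n-k)\,g(\bar\mu)$ with $g(D)=h(D)+D\log(q-1)$ via subadditivity and Jensen, and convert through the Shannon lower bound. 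Your chain is sound (including the unstated case $\bar\mu>1-1/q$, which is trivial since $D(R)\le 1-1/q$), the slack $(n-k)^2\log q/(n(n-k))=(n-k)/n\cdot\log q$ lands exactly where it should, and the argument is more self-contained since it does not use the rate-distortion converse as a black box --- in effect you have inlined a proof of that converse adapted to this setting. The trade-off is generality: the lemma is stated for arbitrary i.i.d.\ entries with $D(\cdot)$ the true distortion-rate function of $A(1,1)$, and your last step needs $R(D)=H(A(1,1))-g(D)$, i.e., tightness of the Shannon lower bound. This holds for uniform sources over $\text{GF}(q)$ and for all binary sources, which covers every application in the paper (Theorem \ref{theo high rate} and Lemma \ref{lem achievability mat spars} are binary uniform), but for a non-uniform source over $\text{GF}(q)$, $q>2$, at rates where the SLB is loose you only obtain $\bar\mu\ge D_{\mathrm{SLB}}((n-k)/n)\le D((n-k)/n)$, a strictly weaker conclusion; the paper's column-wise reduction does not have this limitation. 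You flag this yourself, so it is a scope caveat rather than a gap in the argument you actually give.
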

\begin{proof}
The proof is based on the converse to the rate-distortion theorem (e.g., \cite[Section 10.4]{cover2006elements}). In particular, we show that for any possible choice of $P$ (which, in general, depends on the \emph{realization} of $A$), each column of $AP$ is at most as sparse (in expectation) as the vector of differences between a column of $A$ and the closest vector to it among the \emph{best} rate-distortion code of size $q^{n-k-1}$.
  
Let $P$ be a given invertible matrix over $\text{GF}(q)$. For each column $\bp_j$ in $P$, denote by $i^*(j)$ the index of the first non-zero entry in $\bp_j$ and let $\lambda_j=1/P(i^*(j),j)$. Let $\Lambda$ denote the diagonal matrix whose diagonal entries are $\{\lambda_j\}_{j=1}^{n-k}$. Finally, denote the minimizer in Problem \ref{prob 1} by $\hat{\bx}_A(\bb)$, that is, $\hat{\bx}_A(\bb) = \argmin_\bx d(A\bx,\bb)$. We have:
\begin{multline}
\frac{1}{(n-k)n}\E\{\mathrm{nnz}(AP)\} = \frac{1}{(n-k)n}\E\{\mathrm{nnz}(AP\Lambda)\}
\\
=\frac{1}{(n-k)n}\E\left\{\sum_{j=1}^{n-k}\mathrm{nnz}(A\bp_j \lambda_j)\right\}
=\frac{1}{(n-k)n}\sum_{j=1}^{n-k}\E\left\{d\Big(\ba_{i^*(j)},-\sum_{l = i^*(j)+1}^{n-k}\ba_l P(l,j)\lambda_j\Big)\right\}
\\
\ge\frac{1}{(n-k)n}\sum_{j=1}^{n-k}\E\left\{d\Big(\ba_{i^*(j)},-A_{-j}\hat{\bx}_{A_{-j}}(\ba_{i^*(j)})\Big)\right\}. 
\end{multline}
Now, simply consider $\ba_{i^*(j)}$ as a source vector of length $n$, and the set $\{A_{-j}\bx:\bx \in \text{GF}(q)^{n-k-1}\}$ as a rate-distortion code of rate $(n-k-1)/n$ (that is, having $q^{n-k-1}$ codewords if $A_{-j}$ has full rank). The structure of the code is, of course, defined by the matrix $A_{-j}$. By the converse to the rate-distortion theorem, for \emph{any} subset $\{\bb_1,\ldots,\bb_{q^{nR}}\} \subseteq \text{GF}(q)^n$, where $0\leq R \leq 1$, we have
\begin{equation}\label{converse to RD}
\E\{d(\ba_{i^*(j)},\bb_{\hat{i}})\} \ge nD(R),
\end{equation}
where $\hat{i}$ is the minimizer of $d(\ba_{i^*(j)},\bb_i)$ over all $1\leq i \leq q^{nR}$. A-fortiori, \eqref{converse to RD} holds if the set $\{\bb_1,\ldots,\bb_{q^{nR}}\}$ is not chosen optimally (herein, it is chosen as the linear subspace spanned by the columns of $A_{-j}$). As a result,
\begin{eqnarray}
\frac{1}{(n-k)n}\sum_{j=1}^{n-k}\E\left\{d\Big(\ba_{i^*(j)},-A_{-j}\hat{\bx}_{A_{-j}}(\ba_{i^*(j)})\Big)\right\}&\ge& \frac{1}{(n-k)n}\sum_{j=1}^{n-k}nD\left(\frac{n-k-1}{n}\right)
\nonumber\\
&\ge& D\left(\frac{n-k}{n}\right),\nonumber
\end{eqnarray}
where the last inequality follows from the monotonicity of $D(R)$.
\end{proof}
Note that the above proof is based on the fact that a solution to Problem \ref{prob mat spar} can be seen as sequentially applying Problem \ref{prob 1} to each of the columns of $A$, with the rest of the columns as the second argument. This is also the reason that an approximation for Problem \ref{prob 1} results in an approximation for Problem \ref{prob mat spar}, and bounds on the possible performance in Problem \ref{prob 1} give bounds on the performance in Problem \ref{prob mat spar}. 

In fact, in certain cases, the bound in Lemma \ref{lem lower bound} is achievable. For binary random matrices with uniform i.i.d.\ entries, we have the following achievability result.
% upper bound (achieveability)
\begin{lemma}\label{lem achievability mat spars}
Let $A \in \text{GF}(2)^{n\times (n-k)}$ be a random matrix with uniform i.i.d.\ entries. For any $0 \leq D \leq 1/2$ which satisfies $\frac{n-k}{n} \ge 1-h(D)+2\frac{\log n}{n} + \frac{1}{n}$, there exists an invertible matrix $P \in \text{GF}(2)^{(n-k)\times (n-k)}$ such that with probability at least $1-(n-k)2^{-n}$, for all $1 \leq j \leq n-k$, we have $\mathrm{nnz}(A\bp_j) \leq n D$. In particular, $\frac{1}{n(n-k)}\mathrm{nnz}(AP) \leq D$.
\end{lemma}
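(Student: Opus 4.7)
The plan is to construct $P$ column by column. For each $j\in\{1,\ldots,n-k\}$, parameterize the $j$-th column as $\bp_j = e_j+\tilde{\bx}_j$, where $\tilde{\bx}_j\in\text{GF}(2)^{n-k}$ has a zero in coordinate $j$. Then
\[
A\bp_j \;=\; \ba_j + A_{-j}\bx_j, \qquad \bx_j\in\text{GF}(2)^{n-k-1},
\]
and the requirement $\mathrm{nnz}(A\bp_j)\le nD$ becomes: $\ba_j$ must lie within Hamming distance $nD$ of some codeword of the random linear code $\cC_j := \{A_{-j}\bx:\bx\in\text{GF}(2)^{n-k-1}\}$. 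This reduces the per-column task to an achievability/covering statement for rate-distortion on the random linear ensemble, structurally dual to the converse used in Lemma \ref{lem lower bound}.

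The key estimate to prove is: for a fixed $j$, $\Pr_A[\text{no such } \bx_j \text{ exists}] \le 2^{-n}$. Because the entries of $A_{-j}$ are i.i.d.\ uniform and independent of $\ba_j$, every nonzero codeword $A_{-j}\bx$ is uniform on $\text{GF}(2)^n$, and distinct nonzero codewords are pairwise independent. Let $N_j$ count the codewords of $\cC_j$ inside $B(\ba_j,nD)$. Using $|B(0,nD)| \ge 2^{nh(D)}/(n+1)$ together with the rate hypothesis,
\[
\E[N_j] \;\ge\; (2^{n-k-1}-1)\cdot\frac{2^{nh(D)}}{(n+1)\,2^n} \;\ge\; \frac{n^2}{n+1}.
\]
For an i.i.d.\ codebook this would immediately yield $\Pr[N_j=0]\le \exp(-\E[N_j])\le 2^{-n}$; for the linear ensemble only pairwise independence is available, so plain Chebyshev gives merely $\Pr[N_j=0] \le 1/\E[N_j] \sim 1/n$. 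Bridging this gap to the required $2^{-n}$ is the main obstacle. I will close it by exploiting the full row-wise independence of $A_{-j}$: along each row one obtains an independent Bernoulli$(1/2)$ contribution to every distance $d(A_{-j}\bx,\ba_j)$, giving per-codeword Chernoff tails; a linear-code weight-profile concentration argument (of the type used to show that random linear codes achieve $R(D)$ for the uniform binary source with an exponential error exponent) then promotes the polynomial Chebyshev bound to the required $2^{-n}$, with the rate slack $2\log n/n + 1/n$ tuned exactly for this purpose.

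A union bound over $j=1,\ldots,n-k$ then yields overall failure probability at most $(n-k)2^{-n}$. On the complementary event, setting $\bp_j = e_j + \tilde{\bx}_j$ for every $j$ gives $\mathrm{nnz}(A\bp_j)\le nD$ simultaneously, and hence $\tfrac{1}{n(n-k)}\mathrm{nnz}(AP)\le D$. Invertibility of $P = I + Q$ (with $Q$ off-diagonal) is a subsidiary matter: because each candidate $\bp_j$ is constrained to have a $1$ in coordinate $j$, the portion of the good set that also lies in the span of the previously chosen $\bp_1,\dots,\bp_{j-1}$ consists of the span elements with $1$ in coordinate $j$, of which there are at most $2^{j-2}$; selecting $\bx_j$ greedily from the good set while avoiding this small bad subset keeps $P$ nonsingular throughout, and the negligible additional rate overhead is absorbed into the same lower-order slack.
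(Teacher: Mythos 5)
Your reduction is the right one: writing $\bp_j = e_j + \tilde{\bx}_j$ turns the per-column requirement into a covering statement for the random linear code $\cC_j=\{A_{-j}\bx\}$, which is exactly how the paper frames it (rate--distortion achievability with linear codes). But the step you yourself flag as ``the main obstacle'' is left genuinely open. You correctly observe that pairwise independence plus Chebyshev only yields $\Pr[N_j=0]\lesssim 1/n$, and then assert that row-wise independence, ``per-codeword Chernoff tails'' and ``a weight-profile concentration argument'' promote this to $2^{-n}$. Per-codeword Chernoff controls the concentration of each individual distance $d(A_{-j}\bx,\ba_j)$ around $n/2$, which is the wrong quantity; what is needed is a lower tail for the minimum over $2^{n-k-1}$ heavily correlated codewords, and no argument for that is actually supplied. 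This is precisely the content of the result of Cohen \cite{cohen83nonconstructive} that the paper invokes as a black box (a random linear code of rate $1-h(D)+2\frac{\log n}{n}$ fails to cover a uniform target within radius $nD$ with probability at most $2^{-n}$); it is a nontrivial theorem whose standard proof (iteratively doubling the code and squaring the uncovered fraction) bears no resemblance to your sketch. Without citing or proving it, the lemma's quantitative claim is unsupported.

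The invertibility patch also does not work as stated. The set you must avoid when choosing $\bp_j$ --- elements of the span of $\bp_1,\dots,\bp_{j-1}$ having a $1$ in coordinate $j$ --- has size up to $2^{j-2}$, which for $j$ near $n-k$ is exponentially large, whereas the ``good set'' of admissible $\bp_j$ (codewords of $\cC_j$ within $nD$ of $\ba_j$) has expected size only polynomial in $n$. So the bad set is not small relative to the good set, and the greedy avoidance would instead require showing that the good set is not entirely swallowed by the span --- a separate and delicate claim, since the span is itself a function of $A$. The paper sidesteps both difficulties at once: it builds $P=S_1\cdots S_{n-k}$ as a product of elementary column-operation matrices, each invertible by construction, and pays for this by verifying that $AS_1\cdots S_{j-1}$ remains i.i.d.\ uniform at every stage (an XOR of independent uniform bits is again uniform and independent), so that the cited covering result can be reapplied to each updated matrix before a union bound over the $n-k$ columns.
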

\begin{proof}
At the basis of the proof stands the fact that for binary uniform random variables, \emph{linear codes} achieve the rate-distortion function \cite{cohen83nonconstructive}. 

First, we show how to construct a square matrix $P$ which achieves the required sparsity of $AP$.
For any matrix $C \in \text{GF}(2)^{n\times nR}$, $0\leq R \leq 1$, define $\cC = \{C\bx:\bx \in \text{GF}(2)^{nR}\}$. In \cite{cohen83nonconstructive}, it is shown that if $C$ is random with uniform i.i.d.\ entries, and $\bb$ is a binary vector with uniform i.i.d.\ entries, then if $R \ge 1-h(D)+2\frac{\log n}{n}$ we have
\begin{equation}\label{eq. res from Cohen}
\prob\left(\min_{\bc \in \cC}d(\bc,\bb) > nD \right) \leq 2^{-n}.
\end{equation}
Note that the codeword $\bc$ which minimizes $\min_{\bc \in \cC}d(\bc,\bb)$ is simply a linear combination of the columns of $C$, that is, $C\hat{\bx}_C(\bb)$. Hence, to construct the invertible matrix $P$, we proceed inductively as follows. We consider $\ba_1$ as the source vector ($\bb$ in \eqref{eq. res from Cohen}) and $A_{-1}$ as the linear code. Note that both are random with uniform i.i.d.\ entries. Define a matrix $S_1$ such that all its columns except the first one are equal to those of the identity matrix. For the first column, set $S_1(1,1)=1$ and the rest of the values as simply $\hat{\bx}_{A_{-1}}(\ba_1)$. $S_1$ is invertible.
To continue, construct $S_j$ as follows. Set all its columns except the $j$-th equal to those of the identity matrix. For the $j$-th column, set $S_j(j,j)=1$ and the rest of the values as simply $\hat{\bx}_{(AS_1 \cdots S_{j-1})_{-j}}((AS_1 \cdots S_{j-1})_j)$.  

We set the matrix $P$ as simply $S_1 \cdots S_{n-k}$. Clearly, $P$ is invertible. Note that the $j$-th column of $P$ is simply the $j$-th column of $S_1\cdots S_j$, that is $\bp_j=(S_1\cdots S_j)_j$. Moreover, this column is chosen to minimize the sum of $(AS_1 \cdots S_{j-1})_j$ and the linear combination of the columns of $(AS_1 \cdots S_{j-1})_{-j}$. 

We now show that for each $j$, $AS_1 \cdots S_{j-1}$ remains random with uniform i.i.d.\ entries. Each matrix $S_i$ can be represented as a multiplication of $s_i-1$ matrices, $s_i$ being the number of non-zero entries in the $i$th column of $S_i$. Denote them by $W_{S_i}^1,\ldots,W_{S_i}^{s_i-1}$. Each of these matrices differs from the identity matrix by one entry only: $W_{S_i}^l$ has $1$ at the entry corresponding to the $l$th non-zero element of the $i$th column of $S_i$ (excluding the diagonal element). For example, 
$\tiny\left( \begin{array}{ccc}
1 & 1 & 0 \\
0 & 1 & 0 \\
0 & 1 & 1 \end{array} \right)\ 
= 
\left( \begin{array}{ccc}
1 & 1 & 0 \\
0 & 1 & 0 \\
0 & 0 & 1 \end{array} \right)\ 
\cdot
\left( \begin{array}{ccc}
1 & 0 & 0 \\
0 & 1 & 0 \\
0 & 1 & 1 \end{array} \right)\ 
\normalsize.
$   
Thus, $S_1 \cdots S_{j-1}=W_{S_1}^1,\cdots,W_{S_1}^{s_1-1}\cdots W_{S_{j-1}}^1,\cdots,W_{S_{j-1}}^{s_{j-1}-1}$. Consider the multiplication $AW_{S_1}^1$. This operation simply replaces one column in $A$ with its XOR with another column. Since for two independent uniform bits $X$ and $Y$, $X$ and $X \oplus Y$ are also independent uniform bits, $AW_{S_1}^1$ remains a matrix with binary uniform i.i.d.\ entries. By induction, $AS_1 \cdots S_{j-1}$ is also uniform with i.i.d.\ entries.

Utilizing the above, to compute the probability that some column of $AP$ has a density higher than $nD$, we have
\begin{multline}
\prob\Big(\cup_{1 \leq j \leq n-k} \mathrm{nnz}\big(A\bp_j\big) > n D\Big) \leq \sum_{1 \leq j \leq n-k}\prob\Big(\mathrm{nnz}\big(A\bp_j\big) > n D\Big)
\\
 \quad\quad\quad\quad\quad\quad= \sum_{1 \leq j \leq n-k}\prob\Big(\mathrm{nnz}\big(A(S_1\cdots S_j)_j\big) > n D\Big)
\\
 = \sum_{1 \leq j \leq n-k}\prob\Big(\mathrm{nnz}\Big((AS_1 \cdots S_{j-1})_{-j}\cdot
  \hat{\bx}_{(AS_1 \cdots S_{j-1})_{-j}}\big((AS_1 \cdots S_{j-1})_j\big)\Big) > n D\Big)
\\
 = (n-k)\prob\Big(\mathrm{nnz}\big(A_{-1}\hat{\bx}_{A_{-1}}(\ba_1)\big) > n D\Big)
 \leq (n-k)2^{-n},
\end{multline} 
where the last inequality applies if $\frac{n-k-1}{n} \ge 1-h(D)+2\frac{\log n}{n}$. 
\end{proof}
\begin{remark}
Note that if each column in $P$ is chosen \emph{separately} of the others, in a way that the $n-k-1$ entries of $\bp_j$ minimize the distance between $\ba_j$ and some linear combination of the columns in $A_{-j}$, then $P$ is not necessarily invertible. However, it is not hard to see that all its off-diagonal entries are uniform i.i.d., while all the diagonal entries are deterministic and equal to $1$. To compute the probability that $P$ is invertible, note that the first row of $P$ has $2^{n-k-1}$ possibilities. The $i$-th row of $P$ cannot be a linear combination of the first $i-1$ rows (there are $2^{i-1}$ such combinations), yet, since $P(i,i)=1$, half of the combinations are not counted. As a result, the probability that such a $P$ is invertible is
\[
\frac{(2^{n-k-1})(2^{n-k-1}-2^1/2)(2^{n-k-1}-2^2/2)\cdots(2^{n-k-1}-2^{n-k-1}/2)}{2^{(n-k)^2-(n-k)}}=\Pi_{i=1}^{n-k-1}(1-2^{-i}) \approx 0.29.
\] 
\end{remark}
%%%%%%%%%%%%%%
\subsection{Proof of Theorem \ref{theo high rate}}\label{app. sketch of main result}
At the basis of the proof is the following theme: we identify the properties of the transfer matrix $HB_t$ seen by a terminal $t$. As $B_t$ is defined by the network code, hence constrained by the network topology and, in general, cannot be optimized to our needs, we seek a proper $H$ such that, with high probability, either $HB_t$ is sparse or it can be sparsified. Since the constraints under which $HB_t$ may be sparse are not necessarily met (depending on $B_t$), we take the freedom of choosing $H$ such that at least the result in Lemma \ref{lem achievability mat spars} can be used, and $HB_t$ can be sparsified.

The proof is based on the following proposition, which gives a qualitative description of the properties of the transfer matrices $HB_t$ seen by each terminal $t$. The results herein are stronger than those required to use Lemma \ref{lem achievability mat spars}, but we include them for completeness. A proof is given in Appendix \ref{app prop properties}.
\begin{proposition}\label{prop properties}
Assume $H$ is an $n \times (n-k)$ random binary matrix with i.i.d.\ components, where $n-k = \Theta(n)$ and $\prob(H(1,1)=1)=\lambda/n$ for some $\lambda \ge 1$. Then,
\begin{enumerate}\itemsep = -0.2cm
\item If $B_t$ is full rank then each column of $HB_t$ is composed of independent bits (that is, column-wise independence).
\item If $B_t$ is full rank and $H$ is chosen at random with \emph{uniform} entries ($\lambda = n/2$), then $HB_t$ has uniform i.i.d.\ entries.
\item If each column in $B_t$ contains a linear (in $n$) number of non-zero entries, e.g., $c_j n$ then
\\ \mbox{$
\lim_{n\to\infty} \prob((HB_t)(i,j) = 0) = \frac{1}{2}\left(1+e^{-2c_j\lambda}\right)$},
where $(HB_t)(i,j)$ denotes the $(i,j)$th entry of $HB_t$. If, in addition, $\lambda = w(1)$, we have
\\ \mbox{$
\lim_{n\to\infty} \prob((HB_t)(i,j) = 0) = \frac{1}{2}$}.
\item If each column in $B_t$ contains a sub-linear number of non-zero entries then
\\ \mbox{$
\lim_{n\to\infty} \prob((HB_t)(i,j) = 1) = 0$}. 
\item Let $\bb_{j}$ and $\bb_{j'}$ be two columns of $B_t$. If $d(\bb_{j},\bb_{j'}) = \Theta(n)$, then by setting $\lambda = \omega(n)$ the random variables $(HB_t)(i,j)$ and $(HB_t)(i,j')$ are independent for each $i$. Hence, $HB_t$ has independent columns. If, however, $d(\bb_{j},\bb_{j'}) = o(n)$, then by choosing $\lambda$ as some positive constant, we have 
$\lim_{n\to \infty}\prob\left( (HB_t)(i,j) \ne (HB_t)(i,j)\right) \to 0$. 
\end{enumerate}
\end{proposition}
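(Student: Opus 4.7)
The plan is to prove each of the five items using the elementary observation that $(HB_t)(i,j) = \langle H_i, \bb_j\rangle$, an inner product over $\text{GF}(2)$ between the $i$-th row of $H$ and the $j$-th column of $B_t$. This reduces everything to understanding sums of independent Bernoulli random variables modulo $2$, for which the identity $\prob\bigl(\sum_{k=1}^m X_k = 0\bmod 2\bigr) = \tfrac{1}{2}(1+(1-2p)^m)$, valid for i.i.d.\ Bernoulli$(p)$ random variables, will do essentially all of the work.

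For Part 1, since the rows of $H$ are independent, the entries in a fixed column of $HB_t$ are independent; full rank of $B_t$ ensures no $\bb_j$ is zero. For Part 2, when $H$ has uniform entries, an inner product with any nonzero binary vector is Bernoulli$(1/2)$, giving uniformity of each entry. Independence across different columns within the same row follows from full column rank of $B_t$: the pair $\bigl((HB_t)(i,j),(HB_t)(i,j')\bigr)$ is the image of the uniform random row $H_i$ under the rank-$2$ linear map defined by $[\bb_j\ \bb_{j'}]$, so is uniform on $\text{GF}(2)^2$. Parts 3 and 4 are direct applications of the Bernoulli-sum identity with $p=\lambda/n$: in Part 3 one takes $m = c_j n$ and passes to the limit using $(1-2\lambda/n)^{c_j n}\to e^{-2c_j\lambda}$, and the $\lambda=\omega(1)$ refinement follows because the exponential tends to $0$; in Part 4 one takes $m = s = o(n)$, expands $(1-2\lambda/n)^{s}\approx 1-2\lambda s/n$, and concludes that $\prob((HB_t)(i,j)=1)\to 0$ when $\lambda s/n\to 0$.

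Part 5 is the genuinely combinatorial step. The plan is to decompose the supports of the two columns into the disjoint sets $I=\mathrm{supp}(\bb_j)\setminus\mathrm{supp}(\bb_{j'})$, $J=\mathrm{supp}(\bb_{j'})\setminus\mathrm{supp}(\bb_j)$, and $K=\mathrm{supp}(\bb_j)\cap\mathrm{supp}(\bb_{j'})$, and write
\[
(HB_t)(i,j) = A_1+A_3,\qquad (HB_t)(i,j') = A_2+A_3,
\]
where $A_1,A_2,A_3$ are independent mod-$2$ sums of Bernoulli$(\lambda/n)$ entries of $H_i$ restricted to $I$, $J$, $K$ respectively. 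Then
\[
\prob\bigl((HB_t)(i,j)=0,(HB_t)(i,j')=0\bigr) = \sum_{b\in\{0,1\}}\prob(A_3=b)\,\prob(A_1=b)\,\prob(A_2=b),
\]
and likewise for the other three joint outcomes. In the first regime, where $|I|,|J|=\Theta(n)$ and $\lambda$ is taken large enough that $(1-2\lambda/n)^{|I|},(1-2\lambda/n)^{|J|}\to 0$, each of $\prob(A_1=\cdot),\prob(A_2=\cdot)$ tends to $1/2$, and the joint distribution factorizes: independence follows. In the second regime, where $d(\bb_j,\bb_{j'})=o(n)$ and $\lambda=O(1)$, the expected weights of $A_1,A_2$ are $o(1)$, so $\prob(A_1\ne A_2)\le\prob(A_1=1)+\prob(A_2=1)\to 0$, which gives the near-equality of the two entries.

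The main obstacle I anticipate is not any single calculation but the parametric bookkeeping in Part 5: the scaling of $\lambda$ must be coupled precisely to the distance regime so that the Bernoulli-sum factor $(1-2\lambda/n)^{|I|}$ is driven either to $0$ (for the independence conclusion) or kept close to $1$ (for the near-identity conclusion), and one should also verify that the conclusion does not degenerate in the boundary cases. Beyond this, the pieces are self-contained applications of the row-independence of $H$ and the single mod-$2$ Bernoulli-sum identity, so the full proof should read as five short calculations built on the same template.
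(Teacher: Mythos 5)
Your proposal is correct and follows essentially the same route as the paper's own proof sketch: every item reduces to viewing $(HB_t)(i,j)$ as a mod-$2$ sum of independent Bernoulli$(\lambda/n)$ entries of a row of $H$ and applying the parity identity $\prob\bigl(\sum_k X_k = 0\bigr)=\tfrac{1}{2}\bigl(1+(1-2p)^m\bigr)$, with Part 5 handled through the supports of the two columns. The only nit is in Part 5: the hypothesis $d(\bb_j,\bb_{j'})=\Theta(n)$ only gives $|I|+|J|=\Theta(n)$, not that both $|I|$ and $|J|$ are linear, but your factorization still goes through because it suffices that one of $A_1,A_2$ become asymptotically uniform (the paper sidesteps this by arguing directly with the sum over the symmetric difference, i.e.\ with $A_1+A_2$).
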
  
%We are now able to give a proof for Theorem \ref{theo high rate}. 
\begin{proof}[Proof of Theorem \ref{theo high rate}]
First, by Proposition \ref{prop properties}, for each terminal $t$, the transfer matrix $HB_t$ can either be made sparse using a proper choice of $H$, or it can be made completely i.i.d.\ with binary uniform entries. This is since $\lambda(n)$ is a parameter of the source code, and can be chosen by the code designer.

Then, at the ``worst case" of $HB_t$ being a uniform i.i.d.\ matrix, we use Algorithm \ref{alg mat_spar}, which results, at the limit of infinitely many repetitions, in a sparsity ratio of $2D(R)$ (Lemma \ref{lem achievability mat spars} and \cite{neylon2006sparse}). Note that by \cite[Theorem 8]{gottlieb2010matrix}, there are unsparsifiable matrices, that is, matrices which cannot be made sparser that a linear rate-sparsity curve. In a sense, our results show that under a proper choice of $H$, we can guarantee that the transfer matrices seen by the terminals \emph{can} be sparsified below the linear curve - up to the distortion-rate curve.

Remember that for $q=2$ and $\bb(0)$ being Bernoulli($p$), we have 
 \[
R(D) = \left\{ \begin{array}{ll}
h(p)-h(D) & \textrm{if $0\leq D \leq \min\{p,1-p\}$}\\ 0 & \textrm{if $D >\min\{p,1-p\}$}
\end{array} \right.
\]
and $D(R)$ is the inverse of $R(D)$. Thus, for example, from analyzing $D(R)$ for $R \to 1$, we have that for $R = 1-\frac{(\log n)^2}{n}$, $D(R)=\frac{\log n}{n}$.
\end{proof}%%%%%%%%%%%%%%
\section{A Randomized Algorithm for Linear Rate-Distortion}\label{sec. ramdomized}
Note that through the relation between matrix sparsification and rate-distortion, we are now able to give a randomized algorithm to approximate the closest codeword in a rate-distortion problem (using linear codes) and assess its approximation factor compared to the optimum.
Denote by $C(I,:)$ the matrix consisting of rows $I$ of the matrix $C$, for some index set $I$. Consider the following algorithm, whose performance guarantee is a direct application of Lemma \ref{lem achievability mat spars} and \cite{neylon2006sparse}. 
\begin{algorithm}\label{alg randomized rd}
Input: A linear code $\cC$ of rate $R$ (defined by an $n \times nR$ matrix $C$). A vector $\bb$.
Output: A code word \bc.
\begin{enumerate}[itemsep=0pt,parsep=0pt,topsep=0pt, partopsep=0pt]
\item Randomly choose an index set $I$ of $nR$ independent rows in $C$,
\item Solve for $\bx$ in $C(I,:)\bx = \bb(I)$,
\item Return $C\bx$.
\end{enumerate}
\end{algorithm}
\begin{corollary}Assume $C$ and $\bb$ are i.i.d.\ binary symmetric. With probability at least $e^{-d}$, Algorithm \ref{alg randomized rd} returns a codeword $\bc$ whose distance from $\bb$ is at most $(\frac{nR}{d}+2)\cdot nD(R)$.
\end{corollary}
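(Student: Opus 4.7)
The plan is to express the output $\bc$ of Algorithm \ref{alg randomized rd} in a sparse basis for the code and to control its distance from $\bb$ by combining (i) the direct part of the rate-distortion theorem, applied as in Lemma \ref{lem achievability mat spars}, and (ii) a tail bound on a function of the random index set $I$ drawn by the algorithm.

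First, I invoke Lemma \ref{lem achievability mat spars} twice. The first application, to the columns of $C$, yields an invertible $P\in\text{GF}(2)^{nR\times nR}$ such that every column of $\tilde{C}:=CP$ has Hamming weight at most $nD(R)$, with probability at least $1-nR\cdot 2^{-n}$. The second application, interpreting the columns of $C$ as a linear rate-distortion code and $\bb$ as a uniform source, provides a codeword $\bc^{\ast}=\tilde{C}\by^{\ast}=C\bx^{\ast}$ with $d(\bc^{\ast},\bb)\leq nD(R)$. Let $S$ denote the set of coordinates on which $\bc^{\ast}$ and $\bb$ disagree, so $|S|\leq nD(R)$.

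Since the linear system in Algorithm \ref{alg randomized rd} is equivalent, via $\bx=P\by$, to $\tilde{C}(I,:)\by=\bb(I)$, the returned codeword is $\bc=\tilde{C}\by$. Setting $\mathbf{z}:=\by-\by^{\ast}$ yields $\tilde{C}(I,:)\mathbf{z}=\bb(I)-\bc^{\ast}(I)$, a vector of Hamming weight exactly $|I\cap S|$. The triangle inequality together with the column sparsity of $\tilde{C}$ then gives
\begin{equation*}
d(\bc,\bb) \;\leq\; d(\bc^{\ast},\bb) + \mathrm{wt}(\tilde{C}\mathbf{z}) \;\leq\; nD(R)\cdot\bigl(1 + \mathrm{wt}(\mathbf{z})\bigr).
\end{equation*}

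The remaining step, and the one I expect to be most delicate, is to show that $\mathrm{wt}(\mathbf{z})\leq nR/d+1$ holds with probability at least $e^{-d}$ over the random choice of $I$. My plan is to adapt the random row-sampling analysis of \cite{neylon2006sparse}: since $\mathbf{z}=[\tilde{C}(I,:)]^{-1}(\bb(I)-\bc^{\ast}(I))$ is obtained by inverting a random sub-matrix against a sparse vector whose mean weight is $R\cdot|S|$, a Chernoff/Markov argument on $|I\cap S|$, combined with the column-sparsity of $\tilde{C}$, should deliver the claimed tail bound on $\mathrm{wt}(\mathbf{z})$. Substituting this into the displayed inequality produces $d(\bc,\bb)\leq(nR/d+2)\cdot nD(R)$, as claimed.
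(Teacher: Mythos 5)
Your overall architecture --- invoke the achievability part of Lemma \ref{lem achievability mat spars} to produce a codeword $\bc^{\ast}=C\bx^{\ast}$ with $d(\bc^{\ast},\bb)\leq nD(R)$, then control how far the sampled solution drifts from $\bx^{\ast}$ --- matches the ``direct application'' the paper alludes to (the paper gives no written proof; it delegates the approximation guarantee to the random-sampling analysis of Min-Unsatisfy in \cite{neylon2006sparse}). The problem is your second half. You route the error through $d(\bc,\bc^{\ast})=\mathrm{nnz}(\tilde{C}\mathbf{z})\leq nD(R)\cdot\mathrm{nnz}(\mathbf{z})$ and then need $\mathrm{nnz}(\mathbf{z})\leq nR/d+1$ with probability at least $e^{-d}$, where $\mathbf{z}=[\tilde{C}(I,:)]^{-1}\bigl(\bb(I)-\bc^{\ast}(I)\bigr)$. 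That step is not merely unproven; it is false for large $n$. The matrix $[\tilde{C}(I,:)]^{-1}$ is the inverse of an (essentially uniform) random invertible matrix and is therefore dense, so whenever $I$ meets the disagreement set $S$, the vector $\mathbf{z}$ is a nonzero combination of dense columns and has weight concentrated near $nR/2$. Consequently $\prob\bigl(\mathrm{nnz}(\mathbf{z})\leq nR/d+1\bigr)$ is roughly $\prob(I\cap S=\emptyset)+2^{-\Theta(nR)}\approx(1-R)^{nD(R)}=e^{-\Theta(n)}$, which drops below $e^{-d}$ for any fixed $d\geq 3$ once $n$ is large. A Markov or Chernoff bound on $|I\cap S|$ cannot rescue this: $\E\{|I\cap S|\}=R|S|=\Theta(n)$ and concentrates, and each unit of $|I\cap S|$ costs you an entire dense column of the inverse. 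Relatedly, your first application of Lemma \ref{lem achievability mat spars} (replacing $C$ by the column-sparse $\tilde{C}=CP$) buys nothing, because column sparsity of $\tilde{C}$ says nothing about the columns of $[\tilde{C}(I,:)]^{-1}$, which is what actually multiplies the residual.

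The workable route counts violated equations directly instead of passing through $\mathrm{nnz}(\mathbf{z})$. Writing $\mathbf{e}=\bb-\bc^{\ast}$ and expanding each row as $C_i=\sum_{j\in I}\alpha^{(i)}_j C_j$, the output violates equation $i$ if and only if $\mathbf{e}_i\neq\sum_{j\in I}\alpha^{(i)}_j\mathbf{e}_j$; hence every equation violated by the output but not by $\bx^{\ast}$ can be charged to some row of $I\cap S$ appearing in its expansion. The Berman--Karpinski/Neylon analysis of exactly this random-sampling procedure (with $I$ built greedily in a uniformly random row order) shows that with probability at least $e^{-d}$ the number of violated equations is at most $(\frac{k}{d}+2)\cdot\mathrm{OPT}$, where $k=nR$ is the number of variables; composing with $\mathrm{OPT}\leq nD(R)$ from Lemma \ref{lem achievability mat spars} yields the corollary. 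You should either cite that sampling lemma from \cite{neylon2006sparse} or reproduce its charging argument, and drop the sparsification detour.
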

%%%%%
\section{Numerical Results}\label{sec. numerical}
Figure \ref{fig:approach_rd} gives a numerical example of how close one can get to the rate-distortion curve using Algorithm \ref{alg randomized rd}. This algorithm is at the basis of the sparsification process and, in fact, the density of the matrices we use for Figure \ref{fig:rate02} is that depicted by the approximation curves in Figure \ref{fig:approach_rd}. The sparsification algorithm used was Algorithm \ref{alg mat_spar}, with a random guess to solve the Min-Unsatisfy problem (similar to Algorithm \ref{alg randomized rd}).
% RD approximation
\def\FIGC{
\begin{figure}
\centering
\includegraphics[scale=0.45]{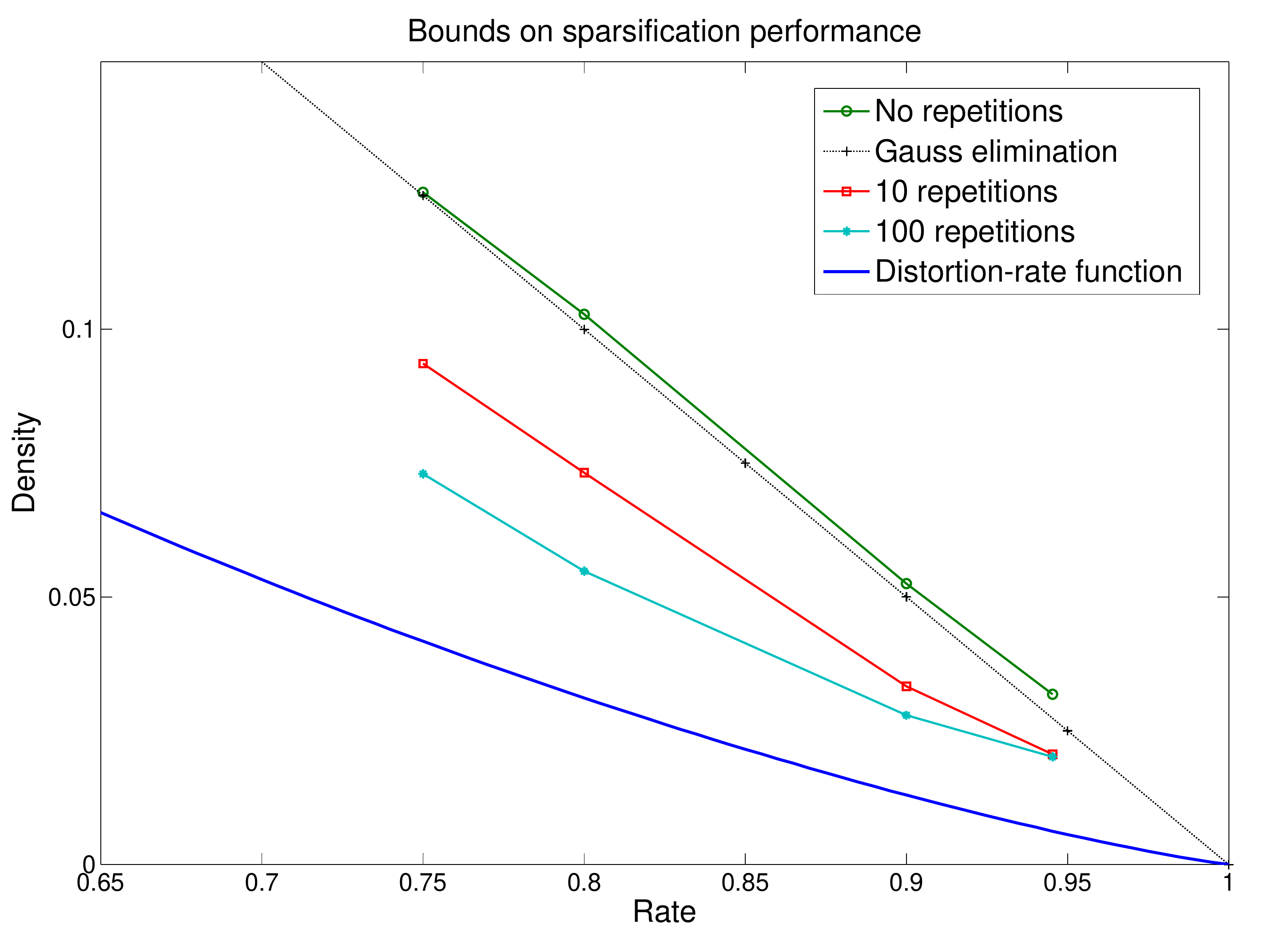}
\caption{Approaching the distortion-rate curve using Algorithm \ref{alg randomized rd}.}
\label{fig:approach_rd}
\end{figure}
}

To depict the performance of LDPC codes created from matrix sparsification of a random i.i.d.\ matrix with uniform bits, Figure \ref{fig:rate02} gives the bit error rates (BER) for decoding of such codes. Figure \ref{fig:rate02} shows also a bit error rate for two matrices created by the following \emph{structured} approach: an identity matrix is replicated $4\times 5$ times, followed by about $10^5$ permutations, where every permutation preserves the constant number of ones in each column (4 ones) and in each row (5 ones). While the results of LDPC codes created from matrix sparsification are inferior to well-optimized LDPCs, note that the codes suggested in this work do not require network-source coding separation, and hence, unlike optimized codes, can be used in networks with multiple receivers and without the (possibly very large) excess capacity above the min-cuts required when separation-based scheme is used.
% rate 02 figure
\def\FIGD{
\begin{figure}
\centering
\includegraphics[scale=0.45]{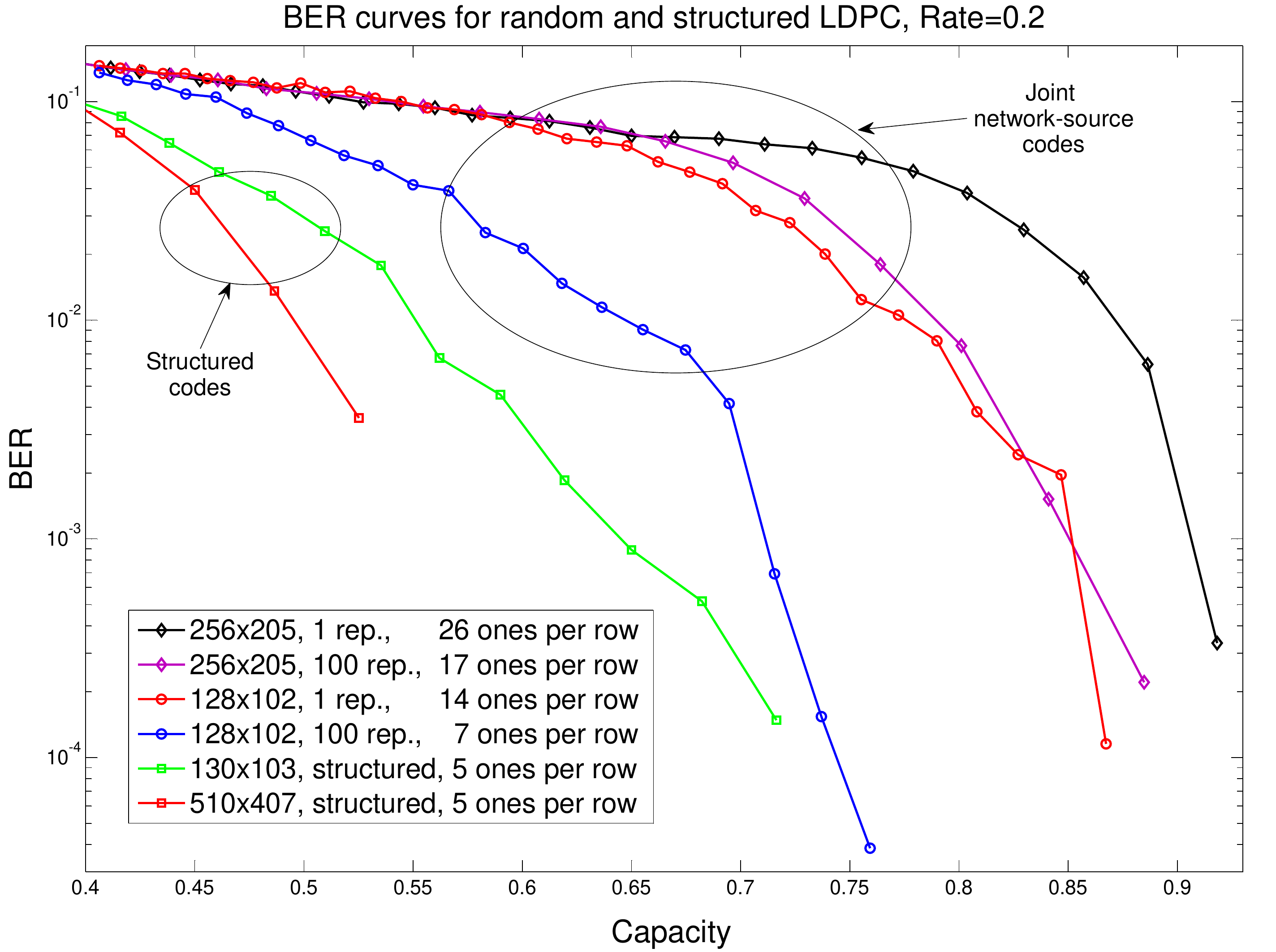}
\caption{Bit error rates for LDPCs created by applying randomized sparsification algorithm on binary symmetric i.i.d.\ matrices of size $n \times nR$, $R=0.8$ (a rate $0.2$ channel code), and for LDPCs created by a structured construction.}
\label{fig:rate02}
\end{figure}
}
Sparse matrices can also be achieved for deterministic network coding matrices, resulting from the polynomial time algorithm of \cite{jaggi2005polynomial}. For example, sparsification results for a graph of $70$ nodes, with coefficients over $\text{GF}(2^4)$, can be found in Figure \ref{fig:sparsification table}. Note that the given percentage of non-zero elements is over $\text{GF}(2^4)$, and not over $\text{GF}(2)$. 
\def\FIGE{
\begin{figure}
\centering
\includegraphics[scale=0.85]{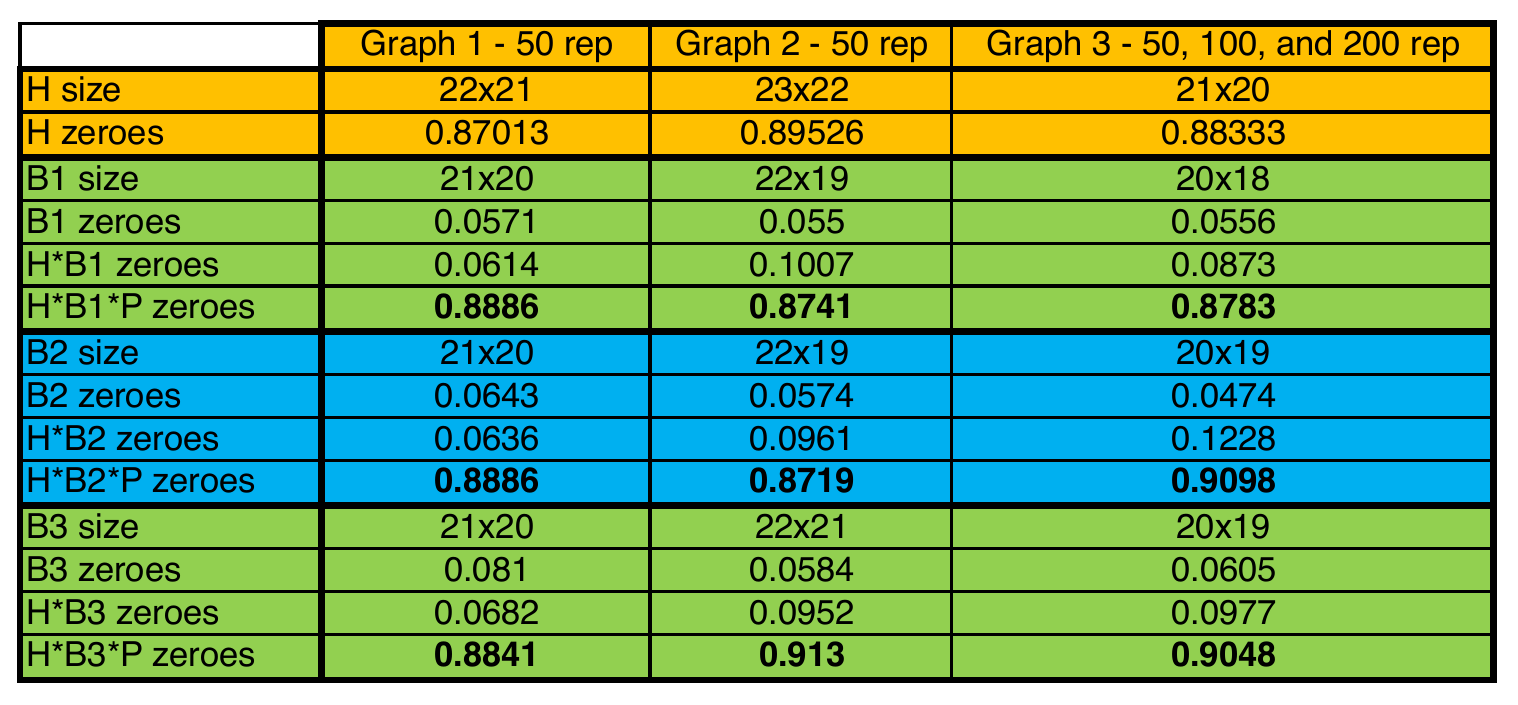}
\caption{Sparsification results for deterministic coding matrices over $\text{GF}(2^4)$.}
\label{fig:sparsification table}
\end{figure}
}
Figure \ref{fig:rate01} includes additional simulation results, for decoding of random LDPC's achieved by matrix sparsification algorithm - Algorithm \ref{alg mat_spar}. We can clearly see that better sparsification yields better bit error rate. Since for the same number of repetitions, Algorithm \ref{alg mat_spar} gives better sparsification for smaller matrices (for larger matrix, more repetitions are needed to achieve better solution for the min-unsatisfy problem), the best BER is achieved by the smallest ($128\times 115$) matrix.

Figure \ref{fig:gauss} shows the density (of ones) achieved by the Algorithm \ref{alg mat_spar}, compared to Gauss elimination and the rate-distortion function $D(R)$, which is the lower bound (as we have shown in Lemma \ref{lem lower bound}) for any matrix  sparsification algorithm. We can see that a single repetition of min-unsatisfy in the Algorithm \ref{alg mat_spar} results in almost the same sparsification achieved by the Gauss elimination. As the number of repetition grows, the density becomes much closer to the lower bound $D(R)$. The matrices used in the simulation are: $300\times 240$, and $300\times 270$ which correspond to the rates $0.8$ and $0.9$ respectively. 
% rate 01 figure
\def\FIGF{
\begin{figure}
\centering
\includegraphics[scale=0.55]{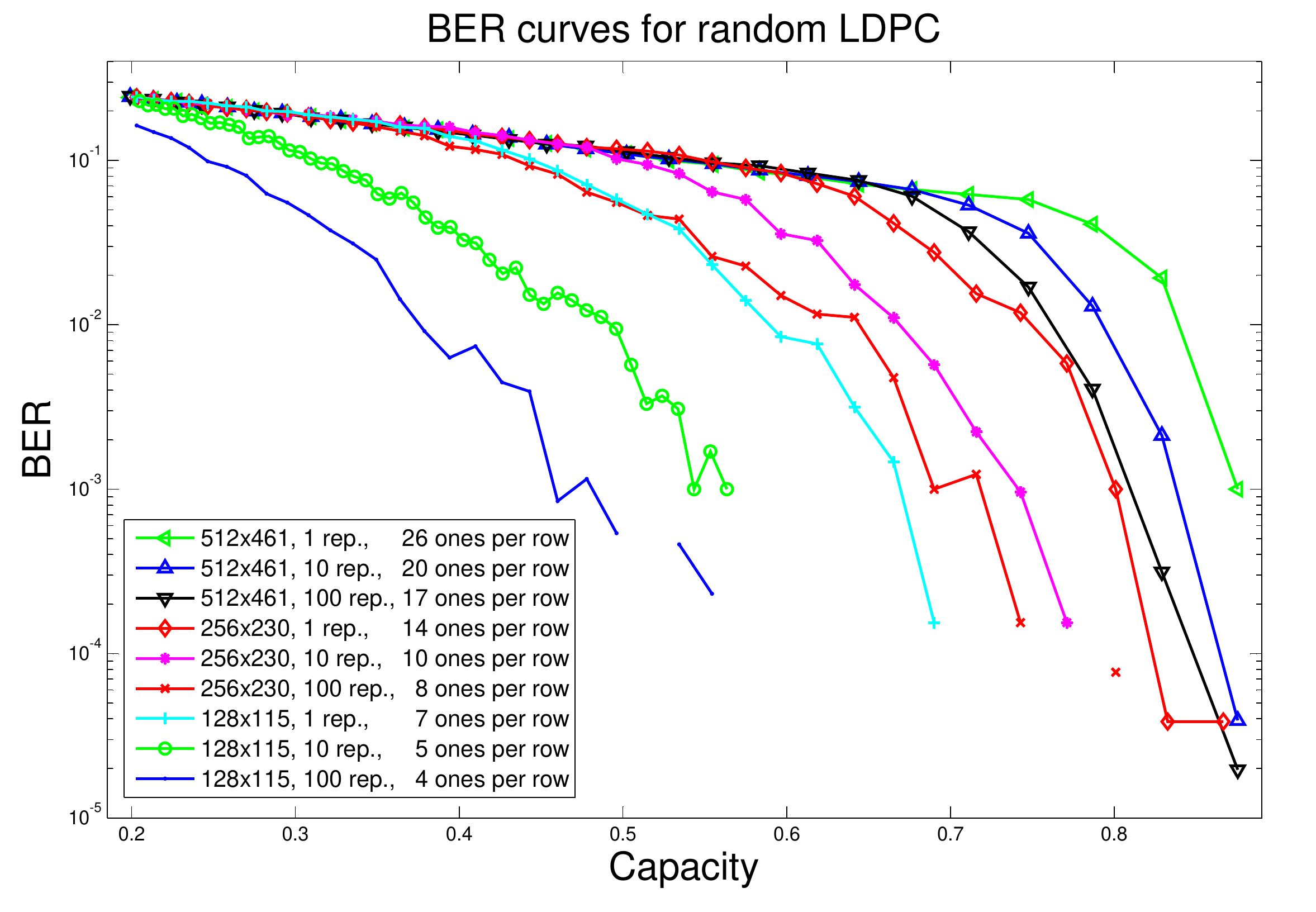}
\caption{Bit error rates for LDPCs created by applying Algorithm \ref{alg mat_spar} on binary symmetric i.i.d.\ matrices of size $n \times nR$, $R=0.9$ (a rate $0.1$ channel code).}
\label{fig:rate01}
\end{figure}
}
% Gauss elimination
\def\FIGG{
\begin{figure}
\centering
\includegraphics[scale=0.55]{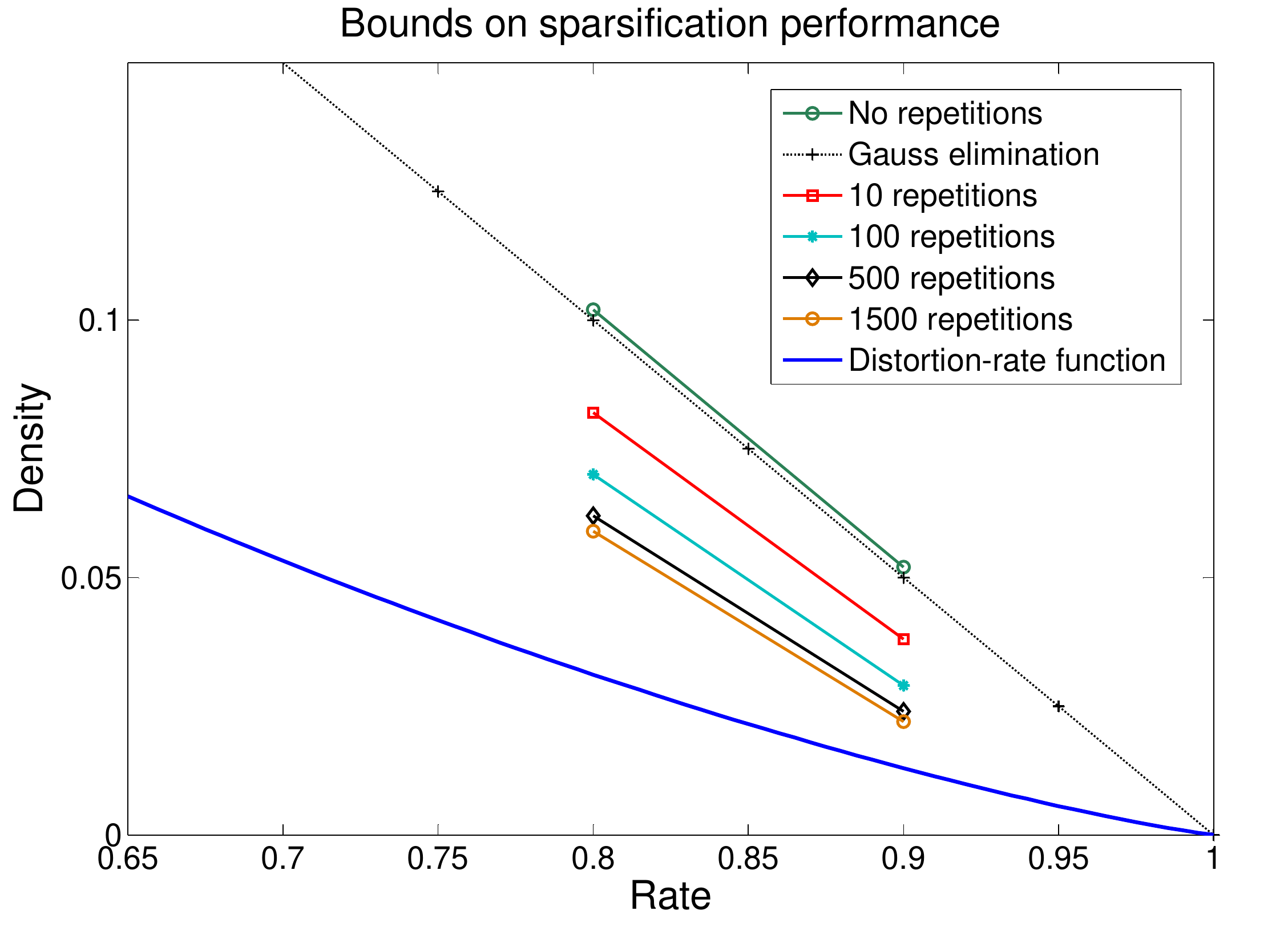}
\caption{Approaching the minimum possible density using Algorithm \ref{alg mat_spar}.}
%Optimal matrix sparsification, which achieved the distortion-rate curve, compared to Gauss Elimination, which achieves a linear curve.}
\label{fig:gauss}
\end{figure}
}

On the more practical side, we note that further sparsification can be sought using methods to sparsify sparse matrices \cite{chang1992hierarchical}. Moreover, the benefit in sparse network coding matrices goes beyond the problem of coding with side information. For example, more efficient algorithms for solving linear equations (in the simple multicast scenario) can be used \cite{wiedemann86sparse_equations,gol1992portrait}. 
%%%%%%%%%%%%%%%%%%%%%%%%%%%%%%%
\section{Conclusion}\label{conc}
In this work, we formally defined joint network-source coding for networks where multiple receivers have side information. We described a code-design procedure which is based on matrix sparsification, enabling each terminal in the network, at the limit of high rates, to receive codewords corresponding to low-density parity-check codes, thus facilitating efficient decoding. Since our scheme is not based on network-source coding separation, optimal rates, matching the cut-set bounds, can be achieved. Simulations performed depict encouraging results, also at non-limiting rates.  
\Comment{
\section{A Sketch of a Random Binning Scheme Achieving Corollary \ref{cor. rates}}\label{app using random binning}
A conceptually simple scheme to achieve the result in Corollary \ref{cor. rates} is random binning at the source, and joint typicality decoding. Since random binning is used (in contrast to structured binning \cite{zamir2002nested}), it is easy to construct a scheme which is oblivious to the network coding, in the sense that a terminal $t$ does not care if it receives the $k$ most significant bits on a bin index, or any $k$ linearly independent equations on the $n>k$ bits of the bin index (as long as the resulting bin size is not too large). This scheme, however, is prohibitively complex to decode.
 
In this scheme, source vectors $x^n$ are randomly placed in $2^{n(\max_{1\leq t \leq K}\hxyt+\epsilon)}$ bins. The bin index, described using $n(\max_{1\leq t \leq K}\hxyt+\epsilon)$ bits, is sent through the network. By Proposition \ref{prop. can get enough equations}, terminal $t$ receives $n(\hxyt+\epsilon)$ independent equations on the bin index, which results in an equivalent bin index in a random binning scheme for $2^{n(\hxyt+\epsilon)}$ bins. It is important to note that the bin terminal $t$ sees is not necessarily the same bin another terminal \emph{with the same rate} will see, but it is of the same size, and includes the correct word $x^n$ in it, so the decoding process is similar. Terminal $t$ can now search for the source vector $\hat{x}^n$ within that bin which is jointly typical with $(y^t_1,\ldots,y^t_n)$ and decode with an arbitrary small error probability as $n \to \infty$.
} % Comment - random binning.
% that's all folks
\begin{singlespace}
\bibliographystyle{IEEEbib}
%\bibliography{all_references}

\end{singlespace}
\newpage
\FIGA
\FIGB
\FIGHandB
\FIGC
\FIGD
\FIGE
\FIGF
\FIGG

\clearpage
\newpage
\appendix
%%%%%%%%%%%%%%%%%%%%%%%%%%%%%%%%%%%%%%%
\section{Proof of Proposition \ref{prop. can get enough equations}}\label{app prop can get enough equations}
Proposition \ref{prop. can get enough equations} merely translates Corollary \ref{cor. broadcast code}, which asserts that linear network codes such that each node $t$ can receive $\mathrm{maxflow}(s,t)$ independent equations on the $w$ source symbols \emph{over some base field $\text{GF}(2^m)$}, to a \emph{binary} representation. Hence, each node $t$ can receive $m\cdot \mathrm{maxflow}(s,t)$ independent equations on the $mw$ bits. The proof is based on the following technical claim, which shows that indeed linear equations on the elements of the source vector over $\text{GF}(2^m)$ translate to linear equations on the bits in the bit representation of the symbols. Proposition \ref{prop. can get enough equations} will then easily follow from Corollary \ref{cor. broadcast code}.
%%%
\begin{claim}\label{claim linear}
For any set of linear equations over $F=\text{GF}(2^m)$, $\bs M = \by$, where $\bs \in F^w$, $M \in F^{w \times l}$ with rank $r$ and $\by \in F^l$, there exists a set of linear equations over $\text{GF}(2)$, and a binary matrix $B \in \text{GF}(2)^{wm \times lm}$ of rank $rm$, such the binary representations of $\bs$ and $\by$, denoted $\bb_s$ and $\bb_y$ (that is, the coefficients over $\text{GF}(2)$ in the representation of the elements of $F$ as polynomials), satisfies $\bb_s B= \bb_y$ over $\text{GF}(2)$.
\end{claim}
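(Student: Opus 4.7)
The plan is to leverage the standard fact that $F = \text{GF}(2^m)$ is an $m$-dimensional vector space over $\text{GF}(2)$ and that multiplication by a fixed element is a $\text{GF}(2)$-linear endomorphism. First I would fix a basis of $F$ over $\text{GF}(2)$, e.g.\ $\{1,\alpha,\alpha^2,\ldots,\alpha^{m-1}\}$ for some generator $\alpha$, so that each element of $F$ is identified with a row vector in $\text{GF}(2)^m$. Under this identification, for every $a\in F$ there is a unique matrix $\phi(a)\in\text{GF}(2)^{m\times m}$ representing the map $u\mapsto u\cdot a$ on $F$. The map $\phi$ is an injective ring homomorphism from $F$ into $m\times m$ binary matrices, so $\phi(a)$ is invertible precisely when $a\neq 0$.

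Next I would construct $B$ as the $w\times l$ block matrix whose $(i,j)$-th $m\times m$ block is $\phi(M_{i,j})$. The verification that $\bb_s B=\bb_y$ is then a direct calculation: writing $\bb_s$ as the concatenation $(\bb_{s_1},\ldots,\bb_{s_w})$ of the binary representations of the entries of $\bs$, the $j$-th block of $\bb_s B$ equals $\sum_{i=1}^{w} \bb_{s_i}\phi(M_{i,j})$. By the definition of $\phi$, each summand $\bb_{s_i}\phi(M_{i,j})$ is precisely the binary representation of $s_i M_{i,j}\in F$, and summing over $i$ yields the binary representation of $\sum_i s_i M_{i,j}=y_j$, i.e.\ $\bb_{y_j}$.

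It remains to argue that $B$ has $\text{GF}(2)$-rank exactly $rm$. For this I would reinterpret $B$ as the matrix of the $F$-linear map $\bs\mapsto \bs M$ when the source and target $F$-spaces are regarded as $\text{GF}(2)$-spaces of dimensions $wm$ and $lm$ via the chosen basis. Since the image of $\bs\mapsto\bs M$ is an $r$-dimensional $F$-subspace of $F^l$, it is a $\text{GF}(2)$-subspace of dimension $rm$ in $\text{GF}(2)^{lm}$; the rank of the matrix representing the map in any $\text{GF}(2)$-basis is therefore $rm$.

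I do not foresee a serious obstacle: the whole argument is essentially the well-known \emph{regular representation} of $\text{GF}(2^m)$, and the only care needed is bookkeeping with row/column conventions and the precise meaning of ``binary representation'' of $\bs$ and $\by$ (namely, concatenation of the coefficient vectors in the fixed basis). The rank statement, which might look like the most delicate point, follows immediately once one observes that the $F$-linear map $\bs\mapsto \bs M$ and the $\text{GF}(2)$-linear map $\bb_s\mapsto \bb_s B$ are the \emph{same} map under the identification $F^k\cong\text{GF}(2)^{km}$.
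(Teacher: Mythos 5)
Your proof is correct. The construction itself is at heart the same as the paper's: the paper's explicit polynomial multiplication in \eqref{one equation poly}, followed by reduction modulo the irreducible polynomial and comparison of the coefficients of $x^0,\ldots,x^{m-1}$, is exactly the regular representation $\phi(m_{j,i})$ written out in the polynomial basis, so your block matrix with $(i,j)$-block $\phi(M_{i,j})$ is the matrix the paper builds implicitly, and your one-line verification of $\bb_s B=\bb_y$ replaces the paper's coefficient bookkeeping. Where you genuinely depart from the paper is the rank claim. The paper argues that the $mr$ binary equations are independent because, once $w-r$ of the unknowns $s_1,\ldots,s_w$ are fixed, the method ``must determine'' the remaining $r$ unknowns, hence the remaining $mr$ binary unknowns --- a correct but somewhat informal determinacy argument. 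You instead observe that the map $\bb_s\mapsto\bb_s B$ is the map $\bs\mapsto\bs M$ read through the $\text{GF}(2)$-linear bijections $F^w\cong\text{GF}(2)^{wm}$ and $F^l\cong\text{GF}(2)^{lm}$, so its image is the identification of an $r$-dimensional $F$-subspace of $F^l$, which has $\text{GF}(2)$-dimension exactly $rm$; this yields the rank in one clean dimension count and is, if anything, tighter than the paper's version. The only care needed, as you note, is the row-vector convention making $\phi$ a ring homomorphism ($\bb_u\phi(a)\phi(b)=\bb_{uab}=\bb_u\phi(ab)$), which is unproblematic since $F$ is commutative.
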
 
%%%
\begin{proof}
The system $\bs M = \by$ represents $r$ independent equations over $F$. Each equation is of the form
\begin{equation}\label{one equation}
s_1 m_{1,i} + s_2 m_{2,i} + \ldots + s_w m_{w,i} = y_i
\end{equation}
(with a slight abuse of notation, $m$ represents the extension degree and $m_{i,j}$ represents the $i,j$ entry of $M$).
Each unknown $s_j$ can be represented as a polynomial in $x$,
\[
\alpha_j^0 + \alpha_j^1 x + \ldots + \alpha_j^{m-1} x^{m-1},
\]
where now $\{\alpha_j^l\}_{l=0}^{m-1}$ are $m$ unknowns over $\text{GF}(2)$. The same holds for all entries of the matrix $M$, yet, since $M$ is known, so do the coefficients of each $m_{i,j}$ in its polynomial representation. Hence, \eqref{one equation} translates to
\begin{multline}\label{one equation poly}
(\alpha_1^0 + \alpha_1^1 x + \ldots + \alpha_1^{m-1} x^{m-1})\cdot
(m_{1,i}^0 + m_{1,i}^1 x + \ldots + m_{1,i}^{m-1} x^{m-1})+\ldots
\\
 = (y_i^0 + y_i^1 x + \ldots + y_i^{m-1}x^{m-1}).
\end{multline} 
The l.h.s.\ of \eqref{one equation poly} has a maximal degree $2m-2$, hence, the reminder modulo an irreducible polynomial of degree $m-1$ is computed. This operation is linear in the coefficients of each $x^i$, and, as a result, the l.h.s.\ translates to a polynomial of degree at most $m-1$ with coefficients which are linear functions of the unknowns $\{\alpha_j^l\}_{j=1,l=0}^{w,m-1}$. When comparing the coefficients of each power of $x$, the result is $m$ linear equations for the unknowns $\{\alpha_j^l\}_{j=1,l=0}^{w,m-1}$. Clearly, $r$ independent linear equations of the form of \eqref{one equation} (over $\text{GF}(2^m)$) will result in $mr$ linear equations for $\{\alpha_j^l\}_{j=1,l=0}^{w,m-1}$ (over $\text{GF}(2)$). To see that the resulting $mr$ equations are indeed independent, note that if $w-r$ of the unknowns $s_1, \ldots, s_w$ are given, this method must determine the values of the rest. Since the remaining $r$ unknowns over $\text{GF}(2^m)$ are represented by $mr$ unknowns over $\text{GF}(2)$, the $mr$ equations over $\text{GF}(2)$ must be independent.
\end{proof}
%%%%%%%%%%%%%%%%%%%%%%%%%%%%%%%%%%%%%%%%%%%%%%%%%%%%%%%%%%%%%%%%%%%%%%%%%%%%%%%%%%%%%%
\section{Proof of Proposition \ref{prop properties}}\label{app prop properties}
\begin{proof}[Proof sketch]
\begin{enumerate}\itemsep = -0.2cm
\item If $B_t$ is full rank, non of its columns is the all-zeros column. Assume column $j$ of $B_t$ contains $l$ ones. Then, for each $i$, the entry $(i,j)$ of $HB_t$ is the sum of $l$ independent Bernoulli$(\lambda/n)$ random variables, and the summands for the entry $(i,j)$ are independent of those for $(i',j)$, for any $i \ne i'$. 
\item If $B_t$ is full rank, non of its columns are identical. This means that the sums composing $(HB_t)(i,j)$ and $(HB_t)(i,j')$ differ in at least one random variable. Since this random variable is binary, uniform and independent of all the rest, $(HB_t)(i,j)$ and $(HB_t)(i,j')$ are also uniform and independent. For different output rows this holds trivially.
\item \label{item if linear}The event that $(HB_t)(i,j) = 0$ is the event that an even number of the Bernoulli random variables is $1$, hence
\begin{equation}\label{prob. of even}
\prob\left((HB_t)(i,j) = 0\right) = \frac{1}{2}\left(1+\left(1-\frac{2\lambda}{n}\right)^l\right)
\end{equation} 
for all $i$. 
Assume $l=c_j n$ and compute the limit as $n \to \infty$. For large enough $\lambda$, the distribution is arbitrarily close to uniform bits.
\item \label{if sub linear than 0} We use the result of Item \ref{item if linear} above. Since the number of non-zero entries in column $j$ of $B_t$ is sub-linear in $n$, for large enough $n$ it is smaller than any $cn$ (for arbitrary small $c>0$). Thus, 
\begin{equation}
\frac{1}{2}\left(1+\left(1-\frac{2\lambda}{n}\right)^{cn}\right) \le
\prob\Big((HB_t)(i,j) = 0\Big) = \frac{1}{2}\left(1+\left(1-\frac{2\lambda}{n}\right)^l\right)
\leq 1.
\end{equation}
Taking the limit of $n \to \infty$ we have 
\begin{equation}
\frac{1}{2}\left(1+e^{-2c\lambda}\right) \le
\prob\left((HB_t)(i,j) = 0\right)
\leq 1
\end{equation}
for arbitrarily small $c$, which completes the proof.
\item If $d(\bb_{j},\bb_{j'}) = \Theta(n)$, then the difference between $(HB_t)(i,j)$ and $(HB_t)(i,j')$ is a sum of a linear number of independent Bernoulli$(\lambda/n)$ random variables. This sum is a uniform bit if $\lambda = \omega(n)$, hence the random variables $(HB_t)(i,j)$ and $(HB_t)(i,j')$ are independent for each $i$. However, if this sum is sub-linear, then a constant $\lambda$ results in an asymptotically zero probability for it to be $1$ (similar to item \ref{if sub linear than 0} above). This means that the probability that $(HB_t)(i,j)$ and $(HB_t)(i,j')$ \emph{differ} is small. Note that in this case, with high probability, the \emph{sum} of $(HB_t)(i,j)$ and $(HB_t)(i,j')$ is $0$, hence summing the two columns gives a sparse column.  
\end{enumerate}
\end{proof}
%%%%%%%%%%%%%%%%%%%%%%%%%%%%%%%%%%%%%%%%%%
\end{document}